\documentclass[conference]{IEEEtran}
\IEEEoverridecommandlockouts

\IEEEaftertitletext{\vspace{1.1\baselineskip}}
\usepackage{cite}
\usepackage{amsmath,amssymb,amsfonts}
\usepackage{graphicx}
\usepackage{textcomp}
\usepackage{xcolor}
\def\BibTeX{{\rm B\kern-.05em{\sc i\kern-.025em b}\kern-.08em
    T\kern-.1667em\lower.7ex\hbox{E}\kern-.125emX}}
\usepackage{graphicx} 
\usepackage{amsmath,amssymb,amsfonts}
\usepackage{textcomp}
\usepackage{xcolor}
\usepackage {lipsum} 
\usepackage {amsthm}
\usepackage{colortbl}
\usepackage{caption}
\usepackage {algorithm}
\usepackage{algpseudocode}
\usepackage{tikz}
\usetikzlibrary{arrows.meta}
\definecolor{Gray}{gray}{0.85}
\newcolumntype{a}{>{\columncolor{Gray}}c}
\theoremstyle{definition} 
\newtheorem{Example}{Example}
\newtheorem{Remark}{Remark}
\newtheorem{Lemma}{Lemma}

\newtheorem{Theorem}{Theorem}
\newtheorem{Definition}{Definition}

\newtheorem{Corollary}{Corollary}

\begin{document}
\title{Secure Multi-User Linearly-Separable \\Distributed Computing
\thanks{The work of Amir Masoud Jafarpisheh is supported by UK Research and Innovation (UKRI) under the UK government’s Horizon Europe funding Guarantee under grant EP/Z536404/1, as part of the FOCAL project funded under Marie Skłodowska-Curie grant agreement No 101169042. This work was also supported by the Huawei France-funded Chair towards Future Wireless Networks, and by the French government under the France 2030 ANR program “PEPR Networks of the Future” (ref. ANR-22-PEFT-0010).\\
A. Jafarpisheh is with the Institute for Imaging, Data, and Communications (IDCOM), School of Engineering, University of Edinburgh, Edinburgh, UK.
A. Khalesi is with the Institut Polytechnique
des Sciences Avancées (IPSA) and LINCS Lab, Paris, France.  P. Elia is with the Communication Systems Department, EURECOM, Sophia Antipolis, France.}}

\author{\IEEEauthorblockN{Amir Masoud Jafarpisheh}{
am.jafarpisheh@ed.ac.uk}
\and
\IEEEauthorblockN{Ali Khalesi}{
ali.khalesi@ipsa.fr}
\and
\IEEEauthorblockN{Petros Elia}{
elia@eurecom.fr}
}

\maketitle
\begin{abstract}
The introduction of the new multi-user linearly-separable distributed computing framework, has recently revealed how a parallel treatment of users can yield large parallelization gains with relatively low computation and communication costs. These gains stem from a new approach that converts the computing problem into a sparse matrix factorization problem; a matrix \(\mathbf{F}\) that describes the users' requests, is decomposed as \(\mathbf{F} = \mathbf{DE}\), where a \(\gamma\)-sparse \(\mathbf{E}\) defines the task allocation across \(N\) servers, and a \(\delta\)-sparse \(\mathbf{D}\) defines the connectivity between \(N\) servers and \(K\) users as well as the decoding process. While this approach provides near-optimal performance, its linear nature has raised data secrecy concerns. 

We adopt an information-theoretic secrecy framework requiring that each user learns nothing more than its own requested function. Our main results provide (i) a necessary condition stating that for each user $k$ observing $\alpha_k$ server responses, the common randomness visible to that user must span a subspace of dimension greater than $\alpha_k-1$, and (ii) a necessary and sufficient condition requiring that removing from $\mathbf{D}$ the columns corresponding to the servers observed by a user leaves a matrix of rank at least $K-1$. Based on these conditions, we design a general, cost-preserving secrecy-enforcing transformation valid over both finite and real fields, obtained by appending to $\mathbf{E}$ a basis of $\mathrm{Null}(\mathbf{D})$ and carefully injecting shared randomness. This scheme preserves communication and computation costs, guarantees perfect information-theoretic secrecy over finite fields, and in the real case yields an explicit mutual-information bound that can be made arbitrarily small by increasing the variance of Gaussian common randomness.
\end{abstract}

\section{Introduction}
Distributed data analytics increasingly operate across administrative and trust boundaries, where multiple users are authorized to compute functions of shared datasets while the underlying data must remain confidential. Prominent examples arise in multi-tenant cloud services, federated learning, healthcare analytics, financial risk assessment, and smart-grid monitoring. In these systems, users request distinct linear functionals of the same data, yet any information beyond the authorized computation may lead to severe privacy and data secrecy breaches.

A subtle but fundamental challenge arises even in purely linear workflows. While linear functions are among the simplest classes of computations, the ability to recover \emph{additional} unintended linear combinations can enable strong inference attacks through aggregation, side information, or repeated queries. This issue persists in both discrete computations over $\mathbf{GF}(q)$ and real-valued computations over $\mathbb{R}$, where privacy guarantees based on computational hardness may fail due to auxiliary information or numerical precision. These considerations motivate an information-theoretic treatment of secrecy that provides unconditional guarantees independent of computational assumptions. 

\subsection{Linearly Separable Computation}
We restrict attention to \emph{linearly separable} distributed computing schemes, in which each user’s requested function can be expressed as a linear combination of intermediate computations performed independently at the servers. Each user $k \in [K]$ requests a linear combination of the messages, $\langle \mathbf{f}_k, \mathbf{w} \rangle,$
where $\mathbf{f}_k \in \mathbb{F}^L$ is the request vector of user $k$, and $\mathbf{w} = [W_1, \ldots, W_L]^{\mathsf{T}}$ denotes the message vector (subfunction outputs) computed across the various servers. 

For $\mathbf{F} = [\mathbf{f}_1, \mathbf{f}_2, \ldots, \mathbf{f}_K]^{\mathsf{T}} \in \mathbb{F}^{K \times L}$ denoting the users’ request matrix, we know from \cite{khalesi2023multi} that our distributed computing problem can be mapped into a matrix factorization problem, where \(\mathbf{F}\) is factorized as
$\mathbf{F} = \mathbf{D}\mathbf{E},$
where $\mathbf{E} = [\mathbf{e}_1, \mathbf{e}_2, \ldots, \mathbf{e}_N]^{\mathsf{T}} \in \mathbb{F}^{N \times L},$ specifies the linear computations assigned to the servers and where $\mathbf{D} = [\mathbf{d}_1, \mathbf{d}_2, \ldots, \mathbf{d}_K]^{\mathsf{T}} \in \mathbb{F}^{K \times N},$ specifies the server-to-user connectivity and the linear aggregation performed by the users. 

In this same context, each server \(n\) computes 
\(A_n = \mathbf{e}_n^{\mathsf{T}}\mathbf{w},\) 
while user \(k\) recovers its desired value as \(\mathbf{d}_k^{\mathsf{T}}\mathbf{A} = \mathbf{f}_k^{\mathsf{T}}\mathbf{w}\), where $\mathbf{A} = [A_1, A_2, \ldots, A_N]^{\mathsf{T}}$. This algebraic structure encompasses a broad class of distributed inference and analytics tasks, and enables a precise characterization of correctness, efficiency~\cite{khalesi2023multi}, and as we will see here, information-theoretic secrecy as well.

\subsection{Related Works}
Distributed computing systems are fundamentally constrained by limited server computation and finite communication capacity, giving rise to the fundamental communication–computation tradeoff that pervades distributed systems. This same tradeoff has been the focus of information-theoretic expositions in various settings such as~\cite{li2017fundamental,yan2022storage,10947203,malak2024multi} to mention just a few. 
In addition, coding has also emerged as a powerful tool, not only for reducing communication costs, but also for mitigating stragglers in linear computations settings~\cite{tandon2017gradient,ye2018communication,yu2020straggler,11195435,10458969}. 
For example, the early work in~\cite{wan2021distributed} considered the problem of distributed computation of linearly separable functions, and focusing on the single-user regime, proceeded to derive optimality results under the assumption of cyclic task assignments with an emphasis on straggler resilience. Other interesting directions that followed, with novel schemes and bounds, can be found in~\cite{wan2021tradeoff,namboodiri2025fundamental} for the single-user regime, as well as in~\cite{khalesi2023multi,namboodiri2026fundamental, 10619542, 10947203, khalesi2023multi2} for the multi-user variant of this problem.  This multiuser approach, which is indeed the closest to our current setting, considered the setting where each server serves multiple users, each with independent requests. These multiuser variants employ novel techniques based on covering codes and tessellation-based task assignments, respectively, but focus exclusively on communication and computation optimality, without addressing information leakage. Thus, while indeed, these prior results establish near-optimal communication and computation policies, they do not determine whether users inevitably obtain unintended information as a consequence of the decoding structure.

Such data secrecy guarantees are important, and have sparked a long line of related research, within the context of distributed computing. 
For example, private function retrieval and private computation frameworks~\cite{sun2018capacity,mirmohseni2018private,10891903} entail novel techniques that carefully conceal users’ requested functions from servers, while secure coded computing schemes incorporate resiliency and privacy against stragglers and adversaries~\cite{yu2019lagrange}. 
Furthermore, private access control~\cite{jafarpisheh2022distributed, meel2024hetdapac} as well as, secret-sharing–based approaches provide information-theoretic data protection~\cite{shamir1979share,bitar2020minimizing, tjell2021privacy}, and multi-user secret sharing~\cite{khalesi2021capacity} extends this latter discipline to the setting of multiple users/receivers. 
Perhaps closer to our own setting is the approach in~\cite{wan2022secure} which involves the study of information-theoretic data secrecy in distributed linearly-separable computation, primarily for single-user systems and specific data assignment structures.

Despite these advances, existing approaches remain tied to particular encoding schemes, task assignments, or cryptographic primitives. Consequently, a general information-theoretic characterization of data secrecy that depends on the structure of the decoding process has not yet been fully established. Focusing on the promising multi-user linearly separable setting, and here seek to understand under what structural conditions can each user provably recover only its requested function output, with no additional information being leaked. Our approach will reveal conditions that span across fields and which are not restricted to specific encoding schemes.

\subsection{Main Contributions}
To address the above question, our work here provides the following contributions. 
\begin{itemize}
    \item \textbf{Per-user necessary secrecy condition:}
We show that if user $k$ observes $w_{\mathsf{H}}(\mathbf{d}^{\mathsf{T}}_k)$\footnote{For a vector $\mathbf{x}$, the Hamming weight $w_{\mathsf{H}}(\mathbf{x})$ is defined as the number of nonzero components of $\mathbf{x}$.} server responses, and the common randomness \emph{visible to that user} spans a subspace of dimension less than $w_{\mathsf{H}}(\mathbf{d}^{\mathsf{T}}_k)-1,$ data secrecy (as it will be defined clearly later on) cannot be guaranteed.
This condition implies the access constraint $w_{\mathsf{H}}(\mathbf{d}^{\mathsf{T}}_k)\le N-K+1$ and yields the universal converse
$\delta \le 1-\frac{K-1}{N}$ on the communication cost. 
\item \textbf{Decoding-matrix-only secrecy criterion:}
Data secrecy holds if and only if, for each user \(k\), removing from \(\mathbf{D}\) the columns corresponding to the servers observed by that user reduces the rank by at most one. Over \(\mathbf{GF}(q)\), this condition guarantees perfect information-theoretic secrecy (zero leakage), while over \(\mathbb{R}\) it guarantees a bound on the mutual-information leakage that can be made arbitrarily small by injecting Gaussian common randomness. Moreover, the total leakage grows only proportionally to the per-response information rate and the number of observable responses, yielding a transparent and scalable data secrecy guarantee.
\item \textbf{Cost-preserving secure transformation:} 
    Given any admissible (possibly non-secure) factorization $\mathbf{F}=\mathbf{D}\mathbf{E}$, we construct a secure scheme by appending to \(\mathbf{E}\) a basis for the null-space \(\mathrm{Null}(\mathbf{D})\) and embedding shared common randomness along these null-space directions. Over \(\mathbf{GF}(q)\),
perfect information-theoretic
secrecy (zero leakage) is guaranteed. Over $\mathbb{R},$ data secrecy is achieved by turning every unintended decoding direction into a channel with low signal to noise ratio (as it will be defined later on). Correctness is preserved, and both communication and computation costs remain unchanged under the standard cost model.
\end{itemize}

Unlike existing secure coded-computing constructions that are tailored to specific encoding solutions or to secret sharing~\cite{yu2019lagrange, bitar2020minimizing, tandon2017gradient,ye2018communication,yu2020straggler, 11195435,10458969}, 
our work provides necessary and sufficient structural conditions that focuses entirely on the decoding matrix, as well as provides a universal, cost-preserving secrecy-enforcing transformation applicable to any admissible linearly separable multi-user scheme.

In the following, Section~\ref{section_system_model} introduces the system model for secure multi-user linearly separable distributed computing, while Section~\ref{section_main_results} presents the main results, with the achievable design being described in Section~\ref{Achievable_algorithm}, and the conclusions being drawn in Section~\ref{section_conclusions}.

\section{System Model}
\label{section_system_model}
\begin{figure}[tb]
\vspace{13pt}
\centering
\includegraphics[trim={.5cm .6cm .5cm .65cm},clip,scale=.72]{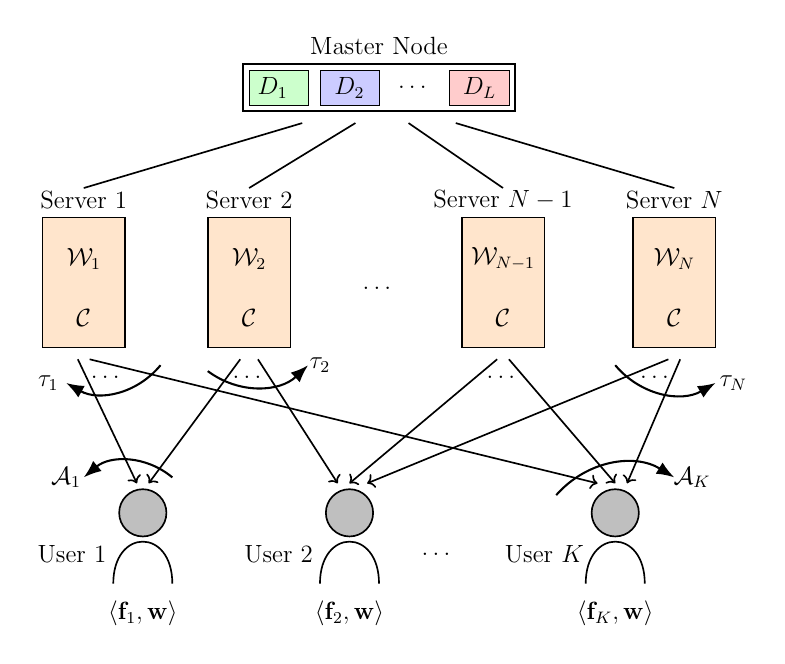}
\captionsetup{justification=centering}
\caption{System model of secure multi-user linearly-separable distributed computing}
\label{system_model_fig}
\end{figure}

As illustrated in Fig.~\ref{system_model_fig}, we consider an $(N, K, L)$ secure multi-user distributed computing system consisting of $N$ servers, $K$ users, and $L$ independent messages $W_1, W_2, \ldots, W_L$, coordinated by a master node. Suppose $N\geq K$, $L\geq K$, and for each $l \in [L]$, each such message $W_l = f_l(D_l)$ is the output of a function $f_l(\cdot)$ that is generally a non-linear and computationally hard function of input data $D_l$. Furthermore, the master node serves as an orchestrator, coordinating the interactions between the users and the servers.

In our setting, each user $k \in [K]$ requests an independent linear combination $\langle \mathbf{f}_k, \mathbf{w} \rangle$ of the message vector $\mathbf{w} = [W_1, \ldots, W_L]^{\mathsf{T}}$, where $\mathbf{f}_k \in \mathbb{F}^L$ is the \emph{request vector} defining the request of user $k$. These users’ request vectors are assumed to be independent of the messages, in the sense that 
\begin{align}
    I(\mathbf{f}_1, \mathbf{f}_2, \ldots, \mathbf{f}_K; \mathbf{w}) = 0.
    \label{query_indepence}
\end{align}

The messages are independent and identically distributed --- either uniformly over $\mathbf{GF}(q)$ or according to $\mathcal{N}(0,\sigma_w^2)$ in the real-valued case --- yielding
\begin{align}
    H(\mathbf{w}) = \sum\limits_{i=1}^L H(W_i)= LH(W_1).
\end{align}

Following the notation in \cite{khalesi2023multi}, upon receiving the users’ request vectors $\mathbf{f}_1, \mathbf{f}_2, \ldots, \mathbf{f}_K$, the master node constructs the full-rank \emph{request matrix}
\begin{align}
    \mathbf{F} = [\mathbf{f}_1, \mathbf{f}_2, \ldots, \mathbf{f}_K]^{\mathsf{T}} \in \mathbb{F}^{K \times L},
\end{align}
and performs a matrix factorization $\mathbf{F} = \mathbf{D}\mathbf{E}$. This factorization yields the \emph{encoding matrix}
\begin{align}
    \mathbf{E} = [\mathbf{e}_1, \mathbf{e}_2, \ldots, \mathbf{e}_N]^{\mathsf{T}} \in \mathbb{F}^{N \times L},
\end{align}
which first specifies the computation tasks assigned to the servers, and how each server combines these computed outputs~- as specified by $\mathbf{e}_n^{\mathsf{T}}$ for each server $n \in [N]$. 
The factorization also specifies the full-rank \emph{decoding matrix}
\begin{align}
    \mathbf{D} = [\mathbf{d}_1, \mathbf{d}_2, \ldots, \mathbf{d}_K]^{\mathsf{T}} \in \mathbb{F}^{K \times N},
\end{align}
which determines how each user aggregates the servers’ responses, with $\mathbf{d}_k^{\mathsf{T}}$ denoting the aggregation vector of user $k$.

As each server $n \in [N]$ is notified, by the master node, of its encoding vector $\mathbf{e}_n^{\mathsf{T}}$ and of its broadcast schedule $\tau_n$, it proceeds to compute a deterministic function of a subset of messages $\mathcal{W}_n \subseteq \mathcal{W}=\{W_1,\ldots, W_L\}$, possibly using shared independent common randomness $\mathcal{C}=\{C_1, C_2,\ldots, C_m\}$ with sufficiently large i.i.d elements. This randomness is shared among the servers, is independent of the messages, and is used later to guarantee data secrecy. Naturally, for each server $n$, we have
\begin{align}
    H(A_n \mid \mathcal{W}_n, \mathcal{C}, \mathbf{e}_n) = 0,
    \label{answer_func}
\end{align}
where we recall that $A_n = \mathbf{e}_n^{\mathsf{T}}\mathbf{w}$ is what is broadcast by server $n$ to the users in $\tau_n$.\footnote{Our framework imposes a one-shot constraint: each server is allowed to send only one linear combination, called a response, to its assigned users.}

Now, each user $k \in[K]$ receives the set of responses 
\begin{align} \label{eq:Ak1}
\mathcal{A}_k=\{A_n: n\in\mathrm{Sup}(\mathbf{d}_k^{\mathsf{T}})\} ,
\end{align}
from a subset of servers indexed by the support $\mathrm{Sup}(\mathbf{d}_k^{\mathsf{T}}) \subseteq [N]$ of their own decoding vector $\mathbf{d}_k^{\mathsf{T}}$. After receiving from the servers, each user $k$ must be able to recover its requested linear combination using $\mathbf{d}_k^{\mathsf{T}}$. 

This requirement is captured by the following \emph{correctness condition}
\begin{align}
    H(\langle \mathbf{f}_k, \mathbf{w} \rangle \mid \mathcal{A}_k, \mathbf{d}_k, \mathbf{f}_k) = 0.
    \label{correctness_condition}
\end{align}
Furthermore, user $k$ must learn nothing about the messages more than its requested linear combination. This is formalized by the \emph{data secrecy constraint}. In particular, in the case of $\mathbf{GF}(q)$, this data secrecy constraint is captured by
\begin{align}
    I(\mathcal{W}; \mathcal{A}_k, \mathbf{f}_k \mid \langle \mathbf{f}_k, \mathbf{w} \rangle) = 0,
    \label{data_secrecy_condition}
\end{align}
while for the case of real-valued computations, data secrecy is satisfied if for any $\varepsilon > 0$, there exists a variance $\sigma_c^2$ for the i.i.d.\ Gaussian common random variables with $C_i \sim \mathcal{N}(0,\sigma_c^2)$, such that
\begin{align}
    I(\mathcal{W}; \mathcal{A}_k, \mathbf{f}_k \mid \langle \mathbf{f}_k, \mathbf{w} \rangle) \leq \varepsilon.
    \label{Privacy_con}
\end{align}
We briefly summarize that an $(N,K,L)$ secure distributed computing scheme entails the users’ request vectors $\{\mathbf{f}_k\}_{k=1}^K$, the servers’ encoding vectors $\{\mathbf{e}_n\}_{n=1}^N$, the users’ decoding vectors $\{\mathbf{d}_k\}_{k=1}^K$, and the shared common randomness $\mathcal{C}$.

Finally, as is common, the \textit{communication cost} is defined as
\begin{align}
    \delta = \frac{\sum\limits_{n=1}^N |\tau_n|}{K N}
           = \frac{\sum\limits_{k=1}^K w_{\mathsf{H}}(\mathbf{d}^{\mathsf{T}}_k)}{K N},
    \label{delta_def}
\end{align}
while the \emph{computation cost} is defined as
\begin{align}
    \gamma = \frac{\max\limits_{l \in [L]} w_{\mathsf{H}}(\mathbf{E}(:,l))}{N},
    \label{gamma_def}
\end{align}
where, for each $l \in [L]$, $w_{\mathsf{H}}(\mathbf{E}(:,l))$ denotes the number of servers that compute message $W_l$. With these in place, we have the following definition. 
\begin{Definition}
A pair $(\gamma,\delta)$ is said to be \emph{feasible} if there exists a secure distributed computing scheme with computation cost $\gamma$ and communication cost $\delta$ that satisfies the correctness condition in \eqref{correctness_condition} as well as the data secrecy constraint in either \eqref{data_secrecy_condition} or \eqref{Privacy_con}, depending on the field case.\footnote{In this paper, we exclude the trivial case in which each user can directly access specific components of the common randomness $\mathcal{C}$ via a link to the master node, which would trivially guarantee data secrecy in the multi-user linearly-separable distributed computing problem.}
\end{Definition}

To further illustrate the multi-user linearly separable distributed computing setting, we present the following simple example.

\begin{Example}
\label{Discrete_Ex}
Consider an \((N = 6,K= 4,L =5)\) multi-user linearly separable distributed computing system with \(N=6\) servers, \(K=4\) users, and \(L=5\) independent messages \(W_1, W_2, \ldots, W_5\), where we recall that for each \(l \in [5]\), each
$W_l = f_l(D_l),$
is simply the output of a function \(f_l(\cdot)\) having input the dataset \(D_l\).

Let us also consider that each user $k \in [4]$ requests an independent linear combination $\langle \mathbf{f}_k, \mathbf{w} \rangle$ of the messages $\mathbf{w} = [W_1, W_2, \ldots, W_5]^{\mathsf{T}}$, where here $\mathbf{f}_1 = [3, 0, -3, 4, -1]^{\mathsf{T}}$,
$\mathbf{f}_2 = [0, 0, 2, 6, 1]^{\mathsf{T}},
\mathbf{f}_3 = [0, 3, 1, 3, 1]^{\mathsf{T}},$ and $
\mathbf{f}_4 = [3, 3, 6, 1, 1]^{\mathsf{T}}.$

After forming the request matrix 
\[
\mathbf{F} =
\begin{bmatrix}
3 & 0 & -3 & 4 & -1\\
0 & 0 & 2 & 6 & 1 \\
0 & 3 & 1 & 3 & 1 \\
3 & 3 & 6 & 1 & 1 
\end{bmatrix},
\]
the master node performs matrix factorization
$\mathbf{F} = \mathbf{D}\mathbf{E}$ in the following manner
\[
\mathbf{D} =
\begin{bmatrix}
2 & 1 & -1 & 0 & 0 & 0\\
0 & 1 & 1 & 1 & 0 & 0\\
0 & 0 & 1 & 0 & 3 & 0\\
0 & 0 & 0 & 1 & 1 & 1
\end{bmatrix},
\:\:
\mathbf{E} =
\begin{bmatrix}
1 & 0 & 0 & 2 & 0 \\
1 & 0 & -2 & 3 & 0 \\
0 & 0 & 1 & 3 & 1 \\
-1 & 0 & 3 & 0 & 0 \\
0 & 1 & 0 & 0 & 0 \\
4 & 2 & 3 & 1 & 1
\end{bmatrix}.
\]
Based on $\mathbf{E} = [\mathbf{e}_1,\ldots,\mathbf{e}_6]^{\mathsf{T}}$ and
$\mathbf{D} = [\mathbf{d}_1,\ldots,\mathbf{d}_4]^{\mathsf{T}}$, 
the master node assigns computation tasks and broadcast sets
$\{\tau_n\}_{n=1}^6$ to the servers. In particular, each server $n$ computes $A_n = \mathbf{e}_n^{\mathsf{T}}\mathbf{w}$ and
broadcasts it to users in $\tau_n$, as summarized in
Table~\ref{server_comp}, and each user $k \in [4]$
\begin{table}[tb]
\vspace{7pt}
\caption{Server-side computations in Example~\ref{Discrete_Ex}.}
\centering
\renewcommand{\arraystretch}{1.3}
\begin{tabular}{|c|c|c|}
\hline
Server & Computed linear combination $A_n =\mathbf{e}_n^{\mathsf{T}}\mathbf{w}$ & Broadcast set $\tau_n$ \\ \hline
1 & $W_1 + 2W_4$ & $\{1\}$ \\ \hline
2 & $W_1 - 2W_3 + 3W_4$ & $\{1,2\}$ \\ \hline
3 & $W_3 + 3W_4 + W_5$ & $\{1,2,3\}$ \\ \hline
4 & $-W_1 + 3W_3$ & $\{2,4\}$ \\ \hline
5 & $W_2$ & $\{3,4\}$ \\ \hline
6 & $4W_1 + 2W_2 + 3W_3 + W_4 + W_5$ & $\{4\}$ \\ \hline
\end{tabular}
\label{server_comp}
\end{table}
accesses servers in $\mathrm{Sup}(\mathbf{d}_k^{\mathsf{T}})$ and computes
$\mathbf{d}_k^{\mathsf{T}}\mathbf{A}=\mathbf{d}_k^{\mathsf{T}}\mathbf{E}\mathbf{w}$, with $\mathbf{A} = [{A}_1, {A}_2, \ldots, {A}_6]^{\mathsf{T}}$, as shown in
Table~\ref{user_comp}.

\begin{table}[tb]
\caption{User-side aggregation in Example~\ref{Discrete_Ex}.}
\centering
\renewcommand{\arraystretch}{1.3}
\begin{tabular}{|c|c|c|}
\hline
User & Accessed servers $\mathrm{Sup}(\mathbf{d}_k^{\mathsf{T}})$ & Computation $\mathbf{d}_k^{\mathsf{T}}\mathbf{A}$ \\ \hline
1 & $\{1,2,3\}$ & $2A_1 + A_2 - A_3$ \\ \hline
2 & $\{2,3,4\}$ & $A_2 + A_3 + A_4$ \\ \hline
3 & $\{3,5\}$ & $A_3 + 3A_5$ \\ \hline
4 & $\{4,5,6\}$ & $A_4 + A_5 + A_6$ \\ \hline
\end{tabular}

\label{user_comp}
\end{table}
At this point, we can directly see that each user $k$ can correctly recover its desired linear combination since
$\mathbf{d}_k^{\mathsf{T}}\mathbf{A}
= \mathbf{d}_k^{\mathsf{T}}\mathbf{E}\mathbf{w}
= \mathbf{f}_k^{\mathsf{T}}\mathbf{w}.$
Using the communication and computation cost metrics~\eqref{delta_def} and~\eqref{gamma_def}, the resulting costs are
$$\delta = \frac{\sum_{n=1}^{N} |\tau_n|}{K N} = \frac{11}{24}, \gamma = \frac{\max_{l \in [L]} w_{\mathsf{H}}\!\left(\mathbf{E}(:,l)\right)}{N} = \frac{4}{6},$$
where $w_{\mathsf{H}}\!\left(\mathbf{E}(:,l)\right)$ denotes the number of servers computing message $W_l$.
\end{Example}

In Example~\ref{Discrete_Ex}, each user $k \in [4]$ has access to a set of responses $\mathcal{A}_k$. Specifically,
\begin{align}
\mathcal{A}_1 &= \{W_1+2W_4,\; W_1-2W_3+3W_4,\; W_3+3W_4+W_5\},\nonumber\\
\mathcal{A}_2 &= \{W_1-2W_3+3W_4,\; W_3+3W_4+W_5,\; -W_1+3W_3\},\nonumber\\
\mathcal{A}_3 &= \{W_3+3W_4+W_5,\; W_2\},\nonumber\\
\mathcal{A}_4 &= \{-W_1+3W_3,\; W_2,\; 4W_1+2W_2+3W_3+W_4+W_5\}.\nonumber
\end{align}

The collection $\mathcal{A}_k$ generally spans a subspace of dimension greater than one. Consequently, user $k$ may form additional linear combinations beyond its request from $\mathrm{Span}(\mathcal{A}_k)$, which raises the question of whether one can provably guarantee that each user can recover exactly one authorized linear functional of the data, and no other additional information.  This question is addressed in the following.

\section{Main Results}
\label{section_main_results}
Lemma~\ref{rank_lemma} establishes a necessary condition for data secrecy, leading to a converse bound on the communication cost \(\delta\) in Theorem~\ref{Theorem_necessary_condition}. Theorem~\ref{sufficient_condition} provides a necessary and sufficient condition for data secrecy over both \(\mathbb{R}\) and \(\mathbf{GF}(q)\).

As one would expect, the resulting solution will entail the use of common randomness. Related to this, let us consider \textit{the common randomness coefficient matrix} $\mathbf{C}$ that collects the coefficients associated with the common randomness variables used by the servers. For each user $k\in[K]$ connected to the $\mathrm{Sup}(\mathbf{d}_k^{\mathsf{T}})$ servers, let $\mathbf{C}(\mathrm{Sup}(\mathbf{d}_k^{\mathsf{T}}),:)$ denote the submatrix formed by the rows of $\mathbf{C}$ corresponding to the servers in $\mathrm{Sup}(\mathbf{d}_k^{\mathsf{T}})$. Moreover, assume that the submatrix $\mathbf{E(\mathrm{Sup(\mathbf{d}_k^{\mathsf{T}}),:})}$ is full rank.\footnote{This is a mild non-degeneracy condition on $\mathbf{E}.$ Without it, some server responses carry redundant message mixtures.
} Lemma~\ref{rank_lemma} provides a necessary condition on $\mathbf{C}(\mathrm{Sup}(\mathbf{d}_k^{\mathsf{T}}),:)$ to guarantee data secrecy for each user $k$.

\begin{Lemma}
\label{rank_lemma}
In any multi-user linearly separable distributed computing scenario, where each user $k \in [K]$ is connected to $w_{\mathsf{H}}(\mathbf{d}^{\mathsf{T}}_k)$ servers, if 
\begin{align}
    \mathrm{Rank}\!\big(\mathbf{C}(\mathrm{Sup}(\mathbf{d}_k^{\mathsf{T}}),:)\big) < w_{\mathsf{H}}(\mathbf{d}^{\mathsf{T}}_k) - 1,
    \label{Rank_requirement}
\end{align}
\end{Lemma}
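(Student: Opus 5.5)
The claim to establish is that, under \eqref{Rank_requirement}, data secrecy for user $k$ fails: the constraint \eqref{data_secrecy_condition} is violated over $\mathbf{GF}(q)$, and over $\mathbb{R}$ the leakage in \eqref{Privacy_con} stays bounded away from zero for every $\sigma_c^2$. The plan is a linear-algebra argument on what user $k$ sees. Write $\mathcal{S}_k=\mathrm{Sup}(\mathbf{d}_k^{\mathsf{T}})$, $\alpha_k=w_{\mathsf{H}}(\mathbf{d}_k^{\mathsf{T}})=|\mathcal{S}_k|$, and $\mathbf{c}$ for the vector of common random variables. The responses observed by user $k$ are then
\[
\mathbf{A}_{\mathcal{S}_k}=\mathbf{E}(\mathcal{S}_k,:)\,\mathbf{w}+\mathbf{C}(\mathcal{S}_k,:)\,\mathbf{c}.
\]
By the non-degeneracy assumption, $\mathbf{E}(\mathcal{S}_k,:)$ has full row rank $\alpha_k$ (we need $\alpha_k\le L$ here).

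The first step is to find many linear combinations of the responses that cancel the randomness. Consider the left null space $\mathcal{V}_k=\{\mathbf{v}\in\mathbb{F}^{\alpha_k}:\mathbf{v}^{\mathsf{T}}\mathbf{C}(\mathcal{S}_k,:)=\mathbf{0}\}$. Its dimension is $\alpha_k-\mathrm{Rank}(\mathbf{C}(\mathcal{S}_k,:))$, which is at least $2$ by \eqref{Rank_requirement}. For every $\mathbf{v}\in\mathcal{V}_k$, the combination $\mathbf{v}^{\mathsf{T}}\mathbf{A}_{\mathcal{S}_k}=\mathbf{v}^{\mathsf{T}}\mathbf{E}(\mathcal{S}_k,:)\mathbf{w}$ is a noiseless function of the messages that user $k$ can compute. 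Since $\mathbf{E}(\mathcal{S}_k,:)$ has full row rank, the map $\mathbf{v}\mapsto\mathbf{v}^{\mathsf{T}}\mathbf{E}(\mathcal{S}_k,:)$ is injective. So the randomness-free functionals available to the user span a subspace $\mathcal{U}_k\subseteq\mathbb{F}^{L}$ with $\dim\mathcal{U}_k\ge 2$.

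The second step is to show that $\mathcal{U}_k$ contains a direction not parallel to $\mathbf{f}_k$. This is immediate from $\dim\mathcal{U}_k\ge2$: pick $\mathbf{u}\in\mathcal{U}_k\setminus\mathrm{Span}(\mathbf{f}_k)$. Then $\mathbf{u}^{\mathsf{T}}\mathbf{w}$ is a deterministic function of $\mathcal{A}_k$, so
\[
I(\mathcal{W};\mathcal{A}_k,\mathbf{f}_k\mid\langle\mathbf{f}_k,\mathbf{w}\rangle)\ \ge\ I(\mathbf{w};\mathbf{u}^{\mathsf{T}}\mathbf{w}\mid\mathbf{f}_k^{\mathsf{T}}\mathbf{w}),
\]
using that the requests are independent of $\mathbf{w}$ by \eqref{query_indepence}. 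Because $\mathbf{u}$ and $\mathbf{f}_k$ are linearly independent and $\mathbf{w}$ is i.i.d.\ uniform, $(\mathbf{u}^{\mathsf{T}}\mathbf{w},\mathbf{f}_k^{\mathsf{T}}\mathbf{w})$ is uniform on $\mathbf{GF}(q)^2$. The right-hand side therefore equals $H(\mathbf{u}^{\mathsf{T}}\mathbf{w}\mid\mathbf{f}_k^{\mathsf{T}}\mathbf{w})=\log q>0$, which violates \eqref{data_secrecy_condition}.

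The main obstacle is the real case. There the leakage is a differential quantity, and it is in fact infinite, since a deterministic function of Gaussian $\mathbf{w}$ is revealed with no noise. More importantly, this leakage is independent of $\sigma_c^2$, because the randomness has been cancelled exactly. So no choice of variance can meet \eqref{Privacy_con}; to be careful, I would instead bound the leakage below by the finite quantity $I(\mathbf{w};\mathbf{u}^{\mathsf{T}}\mathbf{w}+\mathbf{z}\mid\mathbf{f}_k^{\mathsf{T}}\mathbf{w})$ with a fixed auxiliary noise $\mathbf{z}$, a positive constant, by data processing. I would also remark that the same count gives the exact threshold: when the rank equals $\alpha_k-1$, the single noiseless direction is forced to be $\mathbf{f}_k$ itself, which explains the ``$-1$'' in \eqref{Rank_requirement}.
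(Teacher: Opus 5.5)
Your proposal is correct and follows the same route as the paper's proof in Appendix A. Rank--nullity on the left null space of $\mathbf{C}(\mathrm{Sup}(\mathbf{d}_k^{\mathsf{T}}),:)$ gives at least two independent randomness-cancelling combinations, and the full-rank assumption on $\mathbf{E}(\mathrm{Sup}(\mathbf{d}_k^{\mathsf{T}}),:)$ turns them into two independent message functionals. You go further than the paper, which only asserts that secrecy fails: you pick a functional outside $\mathrm{Span}(\mathbf{f}_k)$, compute the leakage as $\log q$ over $\mathbf{GF}(q)$, and give a positive, $\sigma_c^2$-independent lower bound over $\mathbb{R}$.
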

then data secrecy cannot be guaranteed for user $k$. 

\begin{IEEEproof}
User $k$ observes $w_{\mathsf{H}}(\mathbf{d}^{\mathsf{T}}_k)$ linear 
combinations of the messages and the common randomness. Data secrecy requires that no additional information about the messages be revealed. This is possible only if at most one degree of freedom remains after canceling the common randomness, which implies
$\mathrm{Rank}\!\big(\mathbf{C}(\mathrm{Sup}(\mathbf{d}_k^{\mathsf{T}}),:)\big)<w_{\mathsf{H}}(\mathbf{d}^{\mathsf{T}}_k)-1.$
A detailed proof is given in Appendix~\ref{proof_rank_user}.
\end{IEEEproof}

We now proceed with the following theorem which provides a converse bound on the communication cost~$\delta$, over which it is impossible to induce data secrecy.

\begin{Theorem}
\label{Theorem_necessary_condition}
Consider any $(N,K,L)$ multi-user linearly separable distributed computing scheme with $N \geq K$.
If the communication cost satisfies
\[
\delta > 1 - \frac{K-1}{N},
\]
then there exists at least one user that can obtain information beyond its requested linear combination.
\end{Theorem}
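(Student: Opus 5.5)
The plan is to argue by contraposition: assume every user learns nothing beyond its request, and deduce $\delta \le 1-\frac{K-1}{N}$. The intermediate goal is a per-user access constraint $w_{\mathsf{H}}(\mathbf{d}_k^{\mathsf{T}}) \le N-K+1$ for every $k\in[K]$. Once this holds, \eqref{delta_def} gives
\[
\delta=\frac{\sum_{k=1}^K w_{\mathsf{H}}(\mathbf{d}_k^{\mathsf{T}})}{KN}\le \frac{K(N-K+1)}{KN}=1-\frac{K-1}{N}.
\]
Hence if $\delta>1-\frac{K-1}{N}$, some user $k$ must satisfy $w_{\mathsf{H}}(\mathbf{d}_k^{\mathsf{T}})>N-K+1$, and we will show that this user necessarily obtains extra information. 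The statement is thus reduced to a single-user claim, obtained by pigeonhole on the average row weight of $\mathbf{D}$.

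For the single-user claim, fix a user with $\alpha_k:=w_{\mathsf{H}}(\mathbf{d}_k^{\mathsf{T}})\ge N-K+2$. Lemma~\ref{rank_lemma} (read in its intended direction) says secrecy needs the visible randomness to cancel all but one of the $\alpha_k$ observed degrees of freedom, i.e., $\mathrm{Rank}(\mathbf{C}(\mathrm{Sup}(\mathbf{d}_k^{\mathsf{T}}),:))\ge \alpha_k-1$. The randomness must also not corrupt any user's decoding. For correctness \eqref{correctness_condition} to hold for every user $j$, the aggregate $\mathbf{d}_j^{\mathsf{T}}\mathbf{C}$ must vanish. Therefore the columns of $\mathbf{C}$ lie in $\mathrm{Null}(\mathbf{D})$, and since $\mathbf{D}$ is full rank $K$, this space has dimension $N-K$. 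It follows that $\mathrm{Rank}(\mathbf{C})\le N-K$, and so every submatrix satisfies $\mathrm{Rank}(\mathbf{C}(\mathrm{Sup}(\mathbf{d}_k^{\mathsf{T}}),:))\le N-K$.

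Combining the two bounds, secrecy for user $k$ forces $\alpha_k-1\le N-K$, i.e., $\alpha_k\le N-K+1$, which contradicts $\alpha_k\ge N-K+2$. Lemma~\ref{rank_lemma} then certifies a strict leak: after projecting out the randomness, the user retains at least two independent message combinations. Under the non-degeneracy assumption on $\mathbf{E}(\mathrm{Sup}(\mathbf{d}_k^{\mathsf{T}}),:)$, one of these combinations is not a multiple of $\mathbf{f}_k^{\mathsf{T}}\mathbf{w}$, so $I(\mathcal{W};\mathcal{A}_k,\mathbf{f}_k\mid\langle\mathbf{f}_k,\mathbf{w}\rangle)>0$. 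In the real case this information is in fact bounded away from zero regardless of $\sigma_c^2$, because a randomness-free direction is independent of the noise variance. This violates \eqref{data_secrecy_condition} or \eqref{Privacy_con}, respectively.

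The main obstacle is the step $\mathbf{D}\mathbf{C}=\mathbf{0}$. Correctness is required only for each user's own request, so the argument must show that any randomness surviving in $\mathbf{d}_j^{\mathsf{T}}\mathbf{C}w$ (with $\mathcal{C}$ independent of $\mathbf{w}$ and uniform or Gaussian) makes $\langle \mathbf{f}_j,\mathbf{w}\rangle$ undecodable unless user $j$ can cancel it. Within the linear one-shot model, a user's only decoding operation is $\mathbf{d}_j^{\mathsf{T}}$, so the cancellation requirement must hold row-wise. I would make this explicit by writing each response as $A_n=\mathbf{e}_n^{\mathsf{T}}\mathbf{w}+\mathbf{c}_n^{\mathsf{T}}\mathcal{C}$ and invoking the linear decoding structure of the model.
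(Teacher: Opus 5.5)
Your proof is correct and follows essentially the same route as the paper. The paper likewise combines the contrapositive of Lemma~\ref{rank_lemma} (secrecy forces $\mathrm{Rank}\big(\mathbf{C}(\mathrm{Sup}(\mathbf{d}_k^{\mathsf{T}}),:)\big)\ge w_{\mathsf{H}}(\mathbf{d}_k^{\mathsf{T}})-1$) with $\mathbf{D}\mathbf{C}=\mathbf{0}$ and $\mathrm{Rank}(\mathbf{D})=K$ to get $w_{\mathsf{H}}(\mathbf{d}_k^{\mathsf{T}})\le N-K+1$ for every user, then sums over users in the definition of $\delta$. Your explicit justification of $\mathbf{D}\mathbf{C}=\mathbf{0}$ from the fixed linear decoding $\mathbf{d}_j^{\mathsf{T}}\mathbf{A}$, and your remark that a randomness-free direction makes the real-valued leakage independent of $\sigma_c^2$, are refinements the paper leaves implicit.
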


\begin{IEEEproof}
Using Lemma~\ref{rank_lemma}, we can derive the converse bound $w_{\mathsf{H}}(\mathbf{d}_k^{\mathsf{T}}) > N - K + 1$, which limits the number of servers accessible to user $k$ over which data secrecy is impossible. The result follows by applying Lemma~\ref{rank_lemma} to all \(K\) users and using~\eqref{delta_def}. A detailed proof is given in Appendix~\ref{proof_delta_bound}.
\end{IEEEproof}

The following remark, illustrates how the converse bound on the communication cost $\delta$ also translates into the computation cost~$\gamma$. 
\begin{Remark}
Consider an $(N, K,L)$  multi-user linearly separable distributed computing scheme with $N \geq K$ and $L=K$, in the fully decentralized case
$\mathbf{F} = \mathbf{D}\mathbf{I}_N,$ and $\mathbf{E} = \mathbf{I}_{N}.$ Thus, each server $n \in [N],$ computes only function $W_n=f_n(D_n)$, yielding $\gamma = \frac{1}{N}$. In this scenario, $\mathbf{D}= \mathbf{F}$, and in general, $\delta > 1 - \frac{K-1}{N}.$
So at least one user can infer more than its requested linear combination.
This illustrates that servers must perform a sufficient amount of computation to mix the messages effectively; in general, if the servers' computational capabilities are too limited, data secrecy cannot be guaranteed.
\end{Remark}

The following theorem holds over either $\mathbb{R}$ or $\mathbf{GF}(q)$ and establishes a necessary and sufficient condition on the decoding matrix $\mathbf{D}$ under which a multi-user linearly separable distributed computing scheme can be equipped with data secrecy. In the theorem we will use $\mathbf{D}(:,\mathrm{Sup}(\mathbf{d}_k^{\mathsf{T}}))$ to denote the submatrix of $\mathbf{D}$ formed by the columns indexed by the support of $\mathbf{d}_k^{\mathsf{T}}$, and we will use
$\mathbf{D} \setminus \mathbf{D}(:,\mathrm{Sup}(\mathbf{d}_k^{\mathsf{T}}))$
to denote the submatrix obtained by removing these columns from $\mathbf{D}$. Finally, we will use $\lambda_{\mathrm{max}}$ and $\lambda_{\mathrm{min}}$ to denote maximum and minimum eigenvalues, respectively, while naturally we maintain our operating assumption that $N\geq K$.

\begin{Theorem}
\label{sufficient_condition}
A multi-user linearly separable distributed computing scheme can provide data secrecy if and only if
\begin{align}
\mathrm{Rank}\!\left(
\mathbf{D} \setminus \mathbf{D}(:,\mathrm{Sup}(\mathbf{d}_k^{\mathsf{T}}))
\right)
\geq K - 1,
\label{rank_condition}
\end{align}
for every user $k$.  For the case of $\mathbf{GF}(q)$, this entails zero information leakage, while in the case of the reals, the information leakage to user $k \in [K]$ is upper bounded as
\begin{align}
I\!\left(
\mathcal{W}; \mathcal{A}_k, \mathbf{f}_k
\,\middle|\,
\langle \mathbf{f}_k, \mathbf{w} \rangle
\right)
\leq
\frac{w_{\mathsf{H}}(\mathbf{d}^{\mathsf{T}}_k) - 1}{2}
\log\!\left(
1 + M_k \frac{\sigma_w^2}{\sigma_c^2}
\right),
\label{real_data_leakage}
\end{align}
where $$M_k
=
\frac{ \lambda_{\mathrm{max}}(\mathbf{X}_k\mathbf{X}_k^{\mathsf{T}})}{\lambda_{\mathrm{min}}(\mathbf{Y}_k\mathbf{Y}_k^{\mathsf{T}})},$$ $\mathbf{X}_k := \mathbf{E}(\mathcal{S}_k,:)$ 
and
$\mathbf{Y}_k := \mathbf{C}(\mathcal{S}_k,:)$, 
with 
$\mathcal{S}_k$
denoting the index set of a maximal linearly independent subset of server responses in matrix $\mathbf{C}(\mathrm{Sup}(\mathbf{d}_k^{\mathsf{T}}),:).$

\end{Theorem}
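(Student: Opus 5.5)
Write $\mathcal{S}=\mathrm{Sup}(\mathbf{d}_k^{\mathsf{T}})$ and $\alpha_k=w_{\mathsf{H}}(\mathbf{d}_k^{\mathsf{T}})$. The plan is to reduce everything to the geometry of the null space of $\mathbf{D}$. Any secure scheme must add common randomness to the responses, $\mathbf{A}=\mathbf{E}\mathbf{w}+\mathbf{C}\mathbf{c}$, without disturbing correctness. Correctness says $\mathbf{D}\mathbf{A}=\mathbf{F}\mathbf{w}$ for every realization of $\mathbf{c}$, so $\mathbf{D}\mathbf{C}=\mathbf{0}$, i.e. every column of $\mathbf{C}$ lies in $\mathrm{Null}(\mathbf{D})$. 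Hence the randomness visible to user $k$ spans at most
\[
r_k:=\dim\{\mathbf{v}(\mathcal{S}) : \mathbf{v}\in\mathrm{Null}(\mathbf{D})\}.
\]

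\textbf{Computing $r_k$.} The kernel of the restriction map $\mathbf{v}\mapsto\mathbf{v}(\mathcal{S})$ consists of the null vectors supported on $[N]\setminus\mathcal{S}$. Their dimension is
\[
(N-\alpha_k)-\mathrm{Rank}\big(\mathbf{D}\setminus\mathbf{D}(:,\mathcal{S})\big).
\]
Since $\dim\mathrm{Null}(\mathbf{D})=N-K$ because $\mathbf{D}$ has full rank, this gives
\[
r_k=\mathrm{Rank}\big(\mathbf{D}\setminus\mathbf{D}(:,\mathcal{S})\big)+\alpha_k-K .
\]
Therefore $r_k\ge \alpha_k-1$ holds if and only if \eqref{rank_condition} holds.

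\textbf{Necessity.} If \eqref{rank_condition} fails for some $k$, then every admissible $\mathbf{C}$ has $\mathrm{Rank}(\mathbf{C}(\mathcal{S},:))\le r_k<\alpha_k-1$. Lemma~\ref{rank_lemma} then says secrecy fails for user $k$.

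\textbf{Sufficiency construction.} Take $\mathbf{C}$ to be a basis of $\mathrm{Null}(\mathbf{D})$, with one i.i.d. common random variable per column. This is the transformation of Section~\ref{Achievable_algorithm}. It leaves the supports of $\mathbf{D}$ unchanged, so $\delta$ is unchanged. $\gamma$ is also unchanged, because the message part $\mathbf{E}$ is untouched and the appended null-space coordinates are randomness rather than message columns.

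\textbf{Reducing user $k$'s view.} The row vector $\mathbf{d}_k^{\mathsf{T}}(\mathcal{S})$ annihilates $\mathbf{C}(\mathcal{S},:)$, so $\mathrm{Rank}(\mathbf{C}(\mathcal{S},:))\le \alpha_k-1$. Combined with the condition, $\mathrm{Rank}(\mathbf{C}(\mathcal{S},:))=\alpha_k-1$ exactly. The left null space of $\mathbf{C}(\mathcal{S},:)$ is therefore spanned by $\mathbf{d}_k(\mathcal{S})$ alone. Choose an invertible transform of $\mathcal{A}_k$ made of $\mathbf{d}_k^{\mathsf{T}}\mathbf{A}=\mathbf{f}_k^{\mathsf{T}}\mathbf{w}$ together with the responses indexed by $\mathcal{S}_k$, the maximal independent set of rows of $\mathbf{C}(\mathcal{S},:)$. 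Then $\mathcal{A}_k$ is equivalent to $(\mathbf{f}_k^{\mathsf{T}}\mathbf{w},\ \mathbf{X}_k\mathbf{w}+\mathbf{Y}_k\mathbf{c})$, where $\mathbf{Y}_k$ has full row rank $\alpha_k-1$. This step uses the assumption that $\mathbf{E}(\mathcal{S},:)$ has full rank, together with the request independence \eqref{query_indepence}.

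\textbf{Finite field.} Over $\mathbf{GF}(q)$, $\mathbf{Y}_k\mathbf{c}$ is uniform on $\mathbf{GF}(q)^{\alpha_k-1}$ and independent of $\mathbf{w}$. It acts as a one-time pad, so the conditional mutual information in \eqref{data_secrecy_condition} is zero.

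\textbf{Reals.} Over $\mathbb{R}$, bound the leakage by
\[
I(\mathbf{w};\mathbf{X}_k\mathbf{w}+\mathbf{Y}_k\mathbf{c}\mid \mathbf{f}_k^{\mathsf{T}}\mathbf{w})\le I(\mathbf{w};\mathbf{X}_k\mathbf{w}+\mathbf{Y}_k\mathbf{c}).
\]
The inequality holds because $\mathbf{f}_k^{\mathsf{T}}\mathbf{w}$ is a function of $\mathbf{w}$. The right-hand side is a Gaussian channel with value
\[
\tfrac12\log\det\!\big(\mathbf{I}+\tfrac{\sigma_w^2}{\sigma_c^2}(\mathbf{Y}_k\mathbf{Y}_k^{\mathsf{T}})^{-1/2}\mathbf{X}_k\mathbf{X}_k^{\mathsf{T}}(\mathbf{Y}_k\mathbf{Y}_k^{\mathsf{T}})^{-1/2}\big).
\]
Each eigenvalue of the middle matrix is at most $M_k$, and there are $\alpha_k-1$ of them, which gives \eqref{real_data_leakage}.

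\textbf{Main obstacle.} The delicate step is the equivalence-of-views reduction. It must be justified carefully that conditioning on $\mathbf{f}_k^{\mathsf{T}}\mathbf{w}$ loses nothing relative to the original conditional mutual information. The Lemma~\ref{rank_lemma} direction also needs $\mathbf{D}\mathbf{C}=\mathbf{0}$ for general, possibly nonlinear, schemes. I would restrict to linear injection of randomness, as the paper's model implicitly does.
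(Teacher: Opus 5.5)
Your proof is correct and uses the paper's construction (columns of $\mathbf{C}$ spanning $\mathrm{Null}(\mathbf{D})$), but you prove both directions by a different route.

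\textbf{Necessity.} The paper works directly with $\mathbf{D}_{\mathrm{Red},k}$. A left null space of dimension at least two gives $\mathbf{b}_k,\tilde{\mathbf{b}}_k$, and then $\mathbf{b}_k^{\mathsf{T}}\mathbf{D}$ and $\tilde{\mathbf{b}}_k^{\mathsf{T}}\mathbf{D}$ are two randomness-cancelling decoders supported on $\mathrm{Sup}(\mathbf{d}_k^{\mathsf{T}})$, for any $\mathbf{C}$ with $\mathbf{D}\mathbf{C}=\mathbf{0}$. You instead prove the exact identity $r_k=\mathrm{Rank}(\mathbf{D}_{\mathrm{Red},k})+\alpha_k-K$ and feed it into Lemma~\ref{rank_lemma}. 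This is the same linear algebra seen from the dual side. Its advantage is that it gives the equivalence between the $\mathbf{D}$-condition and the visible-randomness condition in both directions at once. It therefore subsumes Lemma~\ref{Lemma_rank_c}, which the paper proves separately and only in one direction, by contradiction.

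\textbf{Sufficiency over $\mathbf{GF}(q)$.} The paper shows only that no second randomness-cancelling decoding vector exists. It leaves implicit the step from there to zero mutual information. Your reduction of the view to $(\mathbf{f}_k^{\mathsf{T}}\mathbf{w},\,\mathbf{X}_k\mathbf{w}+\mathbf{Y}_k\mathbf{c})$, with $\mathbf{Y}_k$ of full row rank, makes this an explicit one-time-pad argument. It shows that the quantity in \eqref{data_secrecy_condition} is exactly zero.

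\textbf{Sufficiency over $\mathbb{R}$.} The paper splits $h(\mathcal{A}_k\mid\cdot)$ by the chain rule and sets the conditional entropy of the redundant response to zero. It then lower-bounds $h(\mathbf{C}(\mathrm{Sup}(\mathbf{d}_k^{\mathsf{T}}),:)\mathbf{c})$ by $h(\mathbf{Y}_k\mathbf{c})$. Both of these laws are supported on an (affine) hyperplane, so those differential entropies are really $-\infty$. The computation is therefore only formal, and the infinities cancel only after the reduction you perform. Your route avoids this: a bijective reparametrization, then $I(\mathbf{w};Z\mid g(\mathbf{w}))\le I(\mathbf{w};Z)$. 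This reaches the same bound \eqref{real_data_leakage} using only nondegenerate Gaussian terms. Whitening by $(\mathbf{Y}_k\mathbf{Y}_k^{\mathsf{T}})^{-1/2}$ then makes the eigenvalue bound by $M_k$ immediate.

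\textbf{What the paper's route buys.} Its strength is economy: a single mechanism, two independent vectors in the left null space of $\mathbf{D}_{\mathrm{Red},k}$, drives both directions and gives direct decoding intuition.

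\textbf{One small correction.} Your sufficiency step does not actually use the full-rank assumption on $\mathbf{E}(\mathrm{Sup}(\mathbf{d}_k^{\mathsf{T}}),:)$. The change of variables is invertible simply because $\mathbf{d}_k$ is nonzero at the single index of $\mathrm{Sup}(\mathbf{d}_k^{\mathsf{T}})$ omitted from $\mathcal{S}_k$. That assumption is needed only in the necessity direction, through Lemma~\ref{rank_lemma}.
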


\begin{IEEEproof}
 The proof contains two parts: the converse proof of~\eqref{rank_condition} is provided in 
Appendix~\ref{proof_ds_finite}. The achievable scheme that induces data secrecy when~\eqref{rank_condition} holds is provided in Section~\ref{Achievable_algorithm}. The proof of the real-valued leakage bound in~\eqref{real_data_leakage} is given in Appendix~\ref{proof_ds_real}.
\end{IEEEproof}

The following remark provides intuition for the bound in~\eqref{real_data_leakage} by relating the information leakage to the per-response information rate and the number of observable server responses.

\begin{Remark}
\label{remark_info_leakage_intuition}
 
The upper bound in~\eqref{real_data_leakage} can be interpreted using concepts from Gaussian communication subchannels. 
Each server response observed by user $k$ is a linear combination of the message vector $\mathbf{w}$ (the \emph{signal}) and the common randomness $\mathbf{c}$ (the \emph{noise}). 
Specifically, $\sigma_w^2 \mathbf{X}_k \mathbf{X}_k^{\mathsf{T}}$ captures the contribution of the messages, while $\sigma_c^2 \mathbf{Y}_k \mathbf{Y}_k^{\mathsf{T}}$ captures the masking effect of the randomness.
 
The ratio $M_k = \lambda_{\mathrm{max}}(\mathbf{X}_k \mathbf{X}_k^{\mathsf{T}}) / \lambda_{\mathrm{min}}(\mathbf{Y}_k \mathbf{Y}_k^{\mathsf{T}})$ can be viewed as a generalized \emph{signal-to-noise ratio (SNR)}, determining how much information about $\mathbf{w}$ can leak through each effective response. 
Meanwhile, $w_{\mathsf{H}}(\mathbf{d}_k^{\mathsf{T}})-1$ corresponds to the number of linearly independent observations user $k$ can access beyond the requested function, i.e., the number of independent subchannels available for leakage.
 
Hence, the total information leakage scales like the product of the number of effective observations and the per-response information rate, analogous to the total capacity of parallel Gaussian subchannels. 
Increasing the variance $\sigma_c^2$ of the common randomness reduces the SNR and therefore suppresses information leakage, providing data secrecy in the real-field setting.
\end{Remark}

\section{Achievable Design}
\label{Achievable_algorithm}
This section introduces the core idea for inducing data secrecy in multi-user linearly separable distributed computing schemes and presents a general achievable design. We first illustrate the approach through a simple example and then describe the general construction. Finally, we characterize classes of decoding matrices $\mathbf{D}$ that satisfy the condition in Theorem~\ref{sufficient_condition}.

\subsection{Example Scheme}
\label{achivable_example}
\begin{Example}
\label{example_sec}
Consider the $(N=6, K=4, L=5)$ multi-user distributed computing scenario as described in Example~\ref{Discrete_Ex}, where recall that the decoding matrix took the form
\[\mathbf{D} =
\begin{bmatrix}
2 & 1 & -1 & 0 & 0 & 0\\
0 & 1 & 1 & 1 & 0 & 0\\
0 & 0 & 1 & 0 & 3 & 0\\
0 & 0 & 0 & 1 & 1 & 1
\end{bmatrix}.\]

The support of the columns directly yields the broadcast sets $\tau_1 = \{1\},\tau_2 = \{1,2\},\tau_3 = \{1,2,3\},\tau_4 = \{2,4\},\tau_5 = \{3,4\},\tau_6 = \{4\},$ each defining the users that a server communicates to, while the supports of the rows yield 
$\mathrm{Sup}(\mathbf{d}^{\mathsf{T}}_1) = \{1,2,3\}, \mathrm{Sup}(\mathbf{d}^{\mathsf{T}}_2) = \{2,3,4\}, \mathrm{Sup}(\mathbf{d}^{\mathsf{T}}_3) = \{3,5\}, \mathrm{Sup}(\mathbf{d}^{\mathsf{T}}_4) = \{4,5,6\}$, each describing the servers that each user collects from.

\paragraph{Step 1 -- Verifying the sufficient condition}
Directly from the above supports $\mathrm{Sup}(\mathbf{d}^{\mathsf{T}}_1),\cdots,\mathrm{Sup}(\mathbf{d}^{\mathsf{T}}_4)$, we first get, for each user $k$, the reduced matrices
$\mathbf{D}_{\mathrm{Red},k} \triangleq \mathbf{D} \setminus \mathbf{D}(:,\mathrm{Sup}(\mathbf{d}^{\mathsf{T}}_k))$, which in our example are given by
\begin{align}
\mathbf{D}_{\mathrm{Red},1} &=
\begin{bmatrix}
0 & 0 & 0\\
1 & 0 & 0\\
0 & 3 & 0\\
1 & 1 & 1
\end{bmatrix}, \quad
\mathbf{D}_{\mathrm{Red},2} =
\begin{bmatrix}
2 & 0 & 0\\
0 & 0 & 0\\
0 & 3 & 0\\
0 & 1 & 1
\end{bmatrix}, \nonumber\\
\mathbf{D}_{\mathrm{Red},3} &=
\begin{bmatrix}
2 & 1 & 0 & 0\\
0 & 1 & 1 & 0\\
0 & 0 & 0 & 0\\
0 & 0 & 1 & 1
\end{bmatrix}, \quad
\mathbf{D}_{\mathrm{Red},4} =
\begin{bmatrix}
2 & 1 & -1\\
0 & 1 & 1\\
0 & 0 & 1\\
0 & 0 & 0
\end{bmatrix}. \nonumber
\end{align}
We can now easily see that for all $k \in [4]$, we have $
\mathrm{Rank}(\mathbf{D}_{\mathrm{Red},k}) = 3 = K-1,$
and we can thus conclude that the condition of Theorem~\ref{sufficient_condition} is satisfied, and secrecy can be induced. Let us now see how, in the following steps, the scheme is shaped in order to provide secrecy. 

\paragraph{Step 2 -- Computing a basis for $\mathrm{Null}(\mathbf{D})$}
It is direct to see that the null space of $\mathbf{D}$ is given by
$\mathrm{Null}(\mathbf{D}) =
\mathrm{Span}\left\{
(-7, 8, -6, -2, 2, 0)^{\mathsf{T}},
(-1, 2, 0, -2, 0, 2)^{\mathsf{T}}
\right\}.$

\paragraph{Step 3 -- Constructing the augmented encoding matrix}
Let us first construct the $N\times (N-K) = 6\times 2$ matrix $\mathbf{C}$ using the above basis vectors of $\mathrm{Null}(\mathbf{D})$ (such that the columns of $\mathbf{C}$ span $\mathrm{Null}(\mathbf{D})$), and let us augment $\mathbf{E}$ to take the form $\tilde{\mathbf{E}} = [\mathbf{E}, \mathbf{C}]$, yielding
\begin{align}
\tilde{\mathbf{E}} =
\begin{bmatrix}
1 & 0 & 0 & 2 & 0 & -7 & -1 \\
1 & 0 & -2 & 3 & 0 & 8 & 2\\
0 & 0 & 1 & 3 & 1 & -6 & 0\\
-1 & 0 & 3 & 0 & 0 & -2 & -2\\
0 & 1 & 0 & 0 & 0 & 2 & 0\\
4 & 2 & 3 & 1 & 1 & 0 & 2
\end{bmatrix}.
\end{align}

\paragraph{Step 4  -- Generating augmented message vector}
Recall that we have $L=5$ function-output datasets $W_1, W_2, \ldots, W_5.$ Let us now create $N-K = 2$ random datasets $C_1$ and $C_2$, which are formed to be independent shared random variables uniformly distributed with the same entropy as the messages. This results to an augmented message vector $$\tilde{\mathbf{w}} = [W_1, W_2, \ldots, W_5, C_1, C_2]^{\mathsf{T}}.$$
 \paragraph{Step 5  -- Generating secured responses}
Each server $n$ computes $A_n = \tilde{\mathbf{e}}_n^{\mathsf{T}} \tilde{\mathbf{w}},$ and transmits the result to all users in $\tau_n.$ 

These server-side computations are summarized in Table~\ref{server_comp_sec}.

\begin{table}[tb]
\vspace{6pt}
\caption{Server-side computations in Example~\ref{example_sec}.}
\centering
\renewcommand{\arraystretch}{1.3}
\begin{tabular}{|c|c|c|}
\hline
Server & Computed linear combination $A_n =\tilde{\mathbf{e}}_n^{\mathsf{T}}\tilde{\mathbf{w}}$ & Broadcast set $\tau_n$ \\ \hline
1 & $W_1 + 2W_4-7C_1-C_2$ & $\{1\}$ \\ \hline
2 & $W_1 - 2W_3 + 3W_4+8C_1+2C_2$ & $\{1,2\}$ \\ \hline
3 & $W_3 + 3W_4 + W_5-6C_1$ & $\{1,2,3\}$ \\ \hline
4 & $-W_1 + 3W_3-2C_1-2C_2$ & $\{2,4\}$ \\ \hline
5 & $W_2+2C_1$ & $\{3,4\}$ \\ \hline
6 & $4W_1 + 2W_2 + 3W_3 + W_4 + W_5+2C_2$ & $\{4\}$ \\ \hline
\end{tabular}
\label{server_comp_sec}
\end{table}
 \paragraph{Step 6  -- Computing requested linear combinations} 
Each user $k \in [4],$ computes  $\mathbf{d}_k^{\mathsf{T}}\mathbf{A}$, with 
$$\mathbf{A} = [A_1, \ldots, A_6]^{\mathsf{T}},$$ 
to recover its requested linear combination. These user-side computations are similar to Example~\ref{Discrete_Ex}, and are summarized in Table~\ref{user_comp}.
\end{Example}

\paragraph{Verifying conditions in example}
We now show that the correctness condition in~\eqref{correctness_condition} and the data secrecy condition~\eqref{data_secrecy_condition}, both hold.

First, the correctness condition in~\eqref{correctness_condition} is easily shown to hold, since by design we have $\mathbf{D}\mathbf{E} = \mathbf{F}$ and we have that each column of $\mathbf{C}$ lies in $\mathrm{Null}(\mathbf{D})$, which automatically yields
$$\mathbf{D}\tilde{\mathbf{E}} = \mathbf{D}[\mathbf{E}, \mathbf{C}] = [\mathbf{F}, \mathbf{0}].$$

 To verify data secrecy, for each user $k$, we first note that the received responses are
\begin{align}
\mathcal{A}_1 = \{&W_1+2W_4-7C_1-C_2,\;\nonumber\\ &W_1-2W_3+3W_4+8C_1+2C_2,\;\nonumber\\ &W_3+3W_4+W_5-6C_1\},\nonumber\\
\mathcal{A}_2 = \{&W_1-2W_3+3W_4+8C_1+2C_2,\;\nonumber\\ &W_3+3W_4+W_5-6C_1,\;\nonumber\\
&-W_1+3W_3-2C_1-2C_2\},\nonumber\\
\mathcal{A}_3 = \{&W_3+3W_4+W_5-6C_1,\;\nonumber\\&W_2+2C_1\},\nonumber\\
\mathcal{A}_4 = \{&-W_1+3W_3-2C_1-2C_2,\; \nonumber\\&W_2+2C_1,\; \nonumber\\&4W_1+2W_2+3W_3+W_4+W_5+2C_2\}.\nonumber
\end{align}
We now prove data secrecy by contradiction. Suppose user $k \in [4]$ can recover two linearly independent linear combinations of the messages. This would mean that there exist two linearly independent decoding vectors $\mathbf{a}_k^{\mathsf{T}}$ and $\tilde{\mathbf{a}}_k^{\mathsf{T}}$ such that both cancel the injected randomness 
\[
\mathbf{a}_k^{\mathsf{T}}\mathbf{C}=\mathbf{0}^{\mathsf{T}},
\tilde{\mathbf{a}}_k^{\mathsf{T}}\mathbf{C}=\mathbf{0}^{\mathsf{T}},
\]
while yielding nonzero message combinations.
Since decoding is performed only from servers accessible to the user $k$, both vectors are supported on $\mathrm{Sup}(\mathbf{d}_k^{\mathsf{T}})$:
\[
\mathrm{Sup}(\mathbf{a}_k^{\mathsf{T}}),\;
\mathrm{Sup}(\tilde{\mathbf{a}}_k^{\mathsf{T}})
\subseteq \mathrm{Sup}(\mathbf{d}_k^{\mathsf{T}}),
\]
which would mean that the submatrix $\mathbf{C}(\mathrm{Sup}(\mathbf{d}_k^{\mathsf{T}}),:)$ has at least a two-dimensional left null space, in turn implying that
\begin{align}
\label{rank_example}
    \mathrm{Rank}\!\big(\mathbf{C}(\mathrm{Sup}(\mathbf{d}_k^{\mathsf{T}}),:)\big)
\leq w_{\mathsf{H}}(\mathbf{d}_k^{\mathsf{T}}) - 2.
\end{align}
Now, for the given construction, the relevant submatrices are
\begin{align}
\mathbf{C}(\mathrm{Sup}(\mathbf{d}^{\mathsf{T}}_1),:) &=
\begin{bmatrix}
-7 & -1\\
8 & 2\\
-6 & 0
\end{bmatrix}, &
\mathbf{C}(\mathrm{Sup}(\mathbf{d}^{\mathsf{T}}_2),:) &=
\begin{bmatrix}
8 & 2\\
-6 & 0\\
-2 & -2
\end{bmatrix}, \nonumber\\
\mathbf{C}(\mathrm{Sup}(\mathbf{d}^{\mathsf{T}}_3),:) &=
\begin{bmatrix}
-6 & 0\\
2 & 0
\end{bmatrix}, &
\mathbf{C}(\mathrm{Sup}(\mathbf{d}^{\mathsf{T}}_4),:) &=
\begin{bmatrix}
-2 & 2\\
2 & 0\\
0 & 2
\end{bmatrix}. \nonumber
\end{align}
A direct rank calculation shows that for all $k \in [4]$,
\[
\mathrm{Rank}\!\big(\mathbf{C}(\mathrm{Sup}(\mathbf{d}_k^{\mathsf{T}}),:)\big)
= w_{\mathsf{H}}(\mathbf{d}_k^{\mathsf{T}}) - 1,
\]
which contradicts the inequality~\eqref{rank_example}. Therefore, no user can form two linearly independent decoding vectors that eliminate the randomness. Thus, the data secrecy condition in~\eqref{data_secrecy_condition} holds, and each user can recover only its requested linear combination and no additional information about the messages.
\paragraph{Information leakage over $\mathbb{R}$}
Assuming now real-valued computations, for each $l\in[5]$, let
$W_l \sim \mathcal{N}(0,\sigma_w^2),$
$C_j \sim \mathcal{N}(0,\sigma_c^2), \; j\in[2],$
all being mutually independent.

 Focusing on user 1, using~\eqref{query_indepence}, we can see that
\begin{align}
I\!\left(\mathcal{W};\mathcal{A}_1,\mathbf{f}_1 \mid \langle \mathbf{f}_1,\mathbf{w}\rangle\right)
&= I\!\left(\mathcal{W};\mathcal{A}_1 \mid \mathbf{f}_1,\langle \mathbf{f}_1,\mathbf{w}\rangle\right) \nonumber\\
&= h(\mathcal{A}_1 \mid \mathbf{f}_1,\langle \mathbf{f}_1,\mathbf{w}\rangle)
 - h(\mathcal{A}_1 \mid \mathbf{f}_1,\mathcal{W}).
 \label{information_leakage_ex}
\end{align}
We now compute the two terms in~\eqref{information_leakage_ex}.

 \textbf{Step 1: Computing} $h(\mathcal{A}_1 \mid \mathbf{f}_1,\langle \mathbf{f}_1,\mathbf{w}\rangle)$.
 
By the chain rule,
\begin{align}
h(\mathcal{A}_1 \mid \mathbf{f}_1,\langle \mathbf{f}_1,\mathbf{w}\rangle)
&= h(A_2, A_3 \mid \mathbf{f}_1,\langle \mathbf{f}_1,\mathbf{w}\rangle) \nonumber\\
&+ h(A_1 \mid A_2,A_3,\mathbf{f}_1,\langle \mathbf{f}_1,\mathbf{w}\rangle).
\end{align}
By correctness~\eqref{correctness_condition}, $A_1$ is a deterministic function of $(A_2,A_3,\langle \mathbf{f}_1,\mathbf{w}\rangle)$, hence
\[
h(A_1 \mid A_2,A_3,\mathbf{f}_1,\langle \mathbf{f}_1,\mathbf{w}\rangle)=0.
\]
Since conditioning does not increase entropy,
\[
h(\mathcal{A}_1 \mid \mathbf{f}_1,\langle \mathbf{f}_1,\mathbf{w}\rangle)
= h(A_2, A_3 \mid \mathbf{f}_1,\langle \mathbf{f}_1,\mathbf{w}\rangle) \le h(A_2, A_3).
\]

Note that $A_3 = W_3 + 3W_4 + W_5 - 6C_1, A_2 = W_1 - 2W_3 + 3W_4 + 8C_1 + 2C_2.$ Let us define the matrices $\mathbf{X}_1$ and $\mathbf{Y}_1$ as below,
 $$\mathbf{X}_1=
 \begin{bmatrix}
1 & 0 &-2 & 3 & 0\\
0 & 0 &1 & 3 & 1
\end{bmatrix},
\mathbf{Y}_1=
 \begin{bmatrix}
8 & 2 \\
-6 & 0
\end{bmatrix}.$$
Then
\begin{align}
    h(A_2, A_3) = h(\mathbf{X}_1\mathbf{w}+\mathbf{Y}_1\mathbf{c}),
\end{align}
where $\mathbf{w}=[W_1, W_2,\ldots, W_5]^{\mathsf{T}}$
and $\mathbf{c}=[C_1, C_2]^{\mathsf{T}}$.
Since $\mathbf{w}$ and $\mathbf{c}$ are independent, we have
\begin{align}
    \Sigma_{\mathbf{X}_1\mathbf{w}+\mathbf{Y}_1\mathbf{c}}=\mathrm{Cov}(\mathbf{X}_1\mathbf{w}+\mathbf{Y}_1\mathbf{c}) = \sigma^2_w\mathbf{X}_1\mathbf{X}_1^{\mathsf{T}}+ \sigma^2_c\mathbf{Y}_1\mathbf{Y}_1^{\mathsf{T}}.
\end{align}
Since $\mathbf{X}_1\mathbf{w}+\mathbf{Y}_1\mathbf{c}$ is a Gaussian random vector, we have
\begin{align}
    h(\mathbf{X}_1\mathbf{w}+\mathbf{Y}_1\mathbf{c}) &= \frac{1}{2}\log((2\pi e)^2|\Sigma_{\mathbf{X}_1\mathbf{w}+\mathbf{Y}_1\mathbf{c}}|)\nonumber\\
    &= \frac{1}{2}\log((2\pi e)^2|\sigma^2_w\mathbf{X}_1\mathbf{X}_1^{\mathsf{T}}+ \sigma^2_c\mathbf{Y}_1\mathbf{Y}_1^{\mathsf{T}}|),
\end{align}
where $|\Sigma_{\mathbf{X}_1\mathbf{w}+\mathbf{Y}_1\mathbf{c}}|$ is the determinant of matrix $\Sigma_{\mathbf{X}_1\mathbf{w}+\mathbf{Y}_1\mathbf{c}}.$
\textbf{Step 2: Computing} $h(\mathcal{A}_1 \mid \mathbf{f}_1,\mathcal{W}).$ 
\begin{align}
    h(\mathcal{A}_1 \mid \mathbf{f}_1,\mathcal{W})&=h(-7C_1-C_2, 8C_1+2C_2, -6C_1).
\end{align}

 Since
\(
\mathrm{Rank}\big(\mathbf{C}(\mathrm{Sup}(\mathbf{d}_1^{\mathsf{T}}),:)\big)=2,
\)
these three variables span a 2-dimensional Gaussian subspace. Thus,
\begin{align}
h(\mathcal{A}_1 \mid \mathbf{f}_1,\mathcal{W})
&= h(8C_1+2C_2,\,-6C_1)\nonumber\\
&= h(\mathbf{Y_1c}) = \frac{1}{2}\log((2\pi e)^2|\sigma^2_c\mathbf{Y}_1\mathbf{Y}_1^{\mathsf{T}}|).
\end{align}

\textbf{Step 3: Combining Steps 1 and 2.}
Combining Steps~1 and~2 yields
\begin{align}
    I\!\left(\mathcal{W};\mathcal{A}_1,\mathbf{f}_1 \mid \langle \mathbf{f}_1,\mathbf{w}\rangle\right)
&\leq \frac{1}{2}\log((2\pi e)^2|\sigma^2_w\mathbf{X}_1\mathbf{X}_1^{\mathsf{T}}+ \sigma^2_c\mathbf{Y}_1\mathbf{Y}_1^{\mathsf{T}}|)\nonumber\\
&- \frac{1}{2}\log((2\pi e)^2|\sigma^2_c\mathbf{Y}_1\mathbf{Y}_1^{\mathsf{T}}|)\nonumber\\
&=\frac{1}{2}\log(\frac{|\sigma^2_w\mathbf{X}_1\mathbf{X}_1^{\mathsf{T}}+ \sigma^2_c\mathbf{Y}_1\mathbf{Y}_1^{\mathsf{T}}|}{| \sigma^2_c\mathbf{Y}_1\mathbf{Y}_1^{\mathsf{T}}|}).
\end{align}
Since $\mathbf{Y}_1$ is full rank and invertible, we have
\begin{align}
    \frac{1}{2}\log(&\frac{|\sigma^2_w\mathbf{X}_1\mathbf{X}_1^{\mathsf{T}}+ \sigma^2_c\mathbf{Y}_1\mathbf{Y}_1^{\mathsf{T}}|}{| \sigma^2_c\mathbf{Y}_1\mathbf{Y}_1^{\mathsf{T}}|})\nonumber\\
    &=\frac{1}{2}\log(|\mathbf{I}_2+\frac{\sigma_w^2}{\sigma_c^2}\mathbf{X}_1\mathbf{X}_1^{\mathsf{T}}(\mathbf{Y}_1\mathbf{Y}_1^{\mathsf{T}})^{-1}|).
\end{align}
However, we have
\begin{align}
    |\mathbf{I}_2+\frac{\sigma_w^2}{\sigma_c^2}\mathbf{X}_1&\mathbf{X}_1^{\mathsf{T}}(\mathbf{Y}_1\mathbf{Y}_1^{\mathsf{T}})^{-1}|\nonumber\\
    &=\prod_{i=1}^2(1+\lambda_i(\frac{\sigma_w^2}{\sigma_c^2}\mathbf{X}_1\mathbf{X}_1^{\mathsf{T}}(\mathbf{Y}_1\mathbf{Y}_1^{\mathsf{T}})^{-1}))\nonumber\\
    &\leq (1+\lambda_{\mathrm{max}}(\frac{\sigma_w^2}{\sigma_c^2}\mathbf{X}_1\mathbf{X}_1^{\mathsf{T}}(\mathbf{Y}_1\mathbf{Y}_1^{\mathsf{T}})^{-1}))^2.
\end{align}
where for each $i \in [2]$, $\lambda_i(\frac{\sigma_w^2}{\sigma_c^2}\mathbf{X}_1\mathbf{X}_1^{\mathsf{T}}(\mathbf{Y}_1\mathbf{Y}_1^{\mathsf{T}})^{-1})$ is the eigenvalue of matrix $\frac{\sigma_w^2}{\sigma_c^2}\mathbf{X}_1\mathbf{X}_1^{\mathsf{T}}(\mathbf{Y}_1\mathbf{Y}_1^{\mathsf{T}})^{-1},$ and $\lambda_{\mathrm{max}}(\frac{\sigma_w^2}{\sigma_c^2}\mathbf{X}_1\mathbf{X}_1^{\mathsf{T}}(\mathbf{Y}_1\mathbf{Y}_1^{\mathsf{T}})^{-1}))$ is the maximum eigenvalue.
Since $\sigma^2_w\mathbf{X}_1\mathbf{X}_1^{\mathsf{T}}$ and $ \sigma^2_c\mathbf{Y}_1\mathbf{Y}_1^{\mathsf{T}}$ are positive semi-definite, we have
\begin{align}
    \lambda_{\mathrm{max}}(\frac{\sigma_w^2}{\sigma_c^2}\mathbf{X}_1\mathbf{X}_1^{\mathsf{T}}(\mathbf{Y}_1\mathbf{Y}_1^{\mathsf{T}})^{-1})\leq \frac{ \lambda_{\mathrm{max}}(\sigma_w^2\mathbf{X}_1\mathbf{X}_1^{\mathsf{T}})}{\lambda_{\mathrm{min}}(\sigma_c^2\mathbf{Y}_1\mathbf{Y}_1^{\mathsf{T}})}.
\end{align}
Which leads
\begin{align}
    I\!\left(\mathcal{W};\mathcal{A}_1,\mathbf{f}_1 \mid \langle \mathbf{f}_1,\mathbf{w}\rangle\right)
&\leq \frac{2}{2}\log(1+\frac{\sigma_w^2 \lambda_{\mathrm{max}}(\mathbf{X}_1\mathbf{X}_1^{\mathsf{T}})}{\sigma_c^2\lambda_{\mathrm{min}}(\mathbf{Y}_1\mathbf{Y}_1^{\mathsf{T}})})\nonumber\\
&= \log(1+\frac{19.66\sigma_w^2 }{1.4\sigma_c^2}),
\end{align}
Where $\lambda_{\mathrm{max}}(\mathbf{X}_1\mathbf{X}_1^{\mathsf{T}})=19.66$ and $\lambda_{\mathrm{min}}(\mathbf{Y}_1\mathbf{Y}_1^{\mathsf{T}})=1.4$. Similarly for user $k\in[4]$, we can define $\mathbf{X}_k$ and $\mathbf{Y}_k$, such that,
\begin{align}
    &I\!\left(\mathcal{W};\mathcal{A}_k,\mathbf{f}_k \mid \langle \mathbf{f}_k,\mathbf{w}\rangle\right)\nonumber\\
    &\leq \frac{w_{\mathsf{H}}(\mathbf{d}_k^{\mathsf{T}})-1}{2}\log(1+\frac{\sigma_w^2 \lambda_{\mathrm{max}}(\mathbf{X}_k\mathbf{X}_k^{\mathsf{T}})}{\sigma_c^2\lambda_{\mathrm{min}}(\mathbf{Y}_k\mathbf{Y}_k^{\mathsf{T}})}).
\end{align}
so for any $\varepsilon>0$, choosing $\sigma_c^2$ sufficiently large ensures
\[
I\!\left(
\mathcal{W}; \mathcal{A}_k, \mathbf{f}_k
\,\middle|\,
\langle \mathbf{f}_k, \mathbf{w} \rangle
\right) \le \varepsilon,
\quad \forall k\in[4],
\]
which establishes~\eqref{Privacy_con}.

Finally, the communication and computation costs remain the same as in Example~\ref{Discrete_Ex}. In particular, $\gamma = \frac{2}{3}$ and $\delta = \frac{11}{24} \le 1 - \frac{K-1}{N} = \frac{1}{2},$
as required by Theorem~\ref{Theorem_necessary_condition}. Note that the computation cost \(\gamma\) is defined based on the computation tasks assigned to the servers and depends on \(\mathbf{E}\), not on the augmented matrix \(\tilde{\mathbf{E}}\). Therefore, the pair \((\gamma,\delta)\) is feasible for the proposed scheme.

Based on the core ideas introduced in Section~\ref{achivable_example}, we now describe the general scheme.  

\subsection{General Scheme}
Let us now describe the scheme in general terms.  
We start by assuming a known request matrix \(\mathbf{F} = [\mathbf{f}_1, \mathbf{f}_2, \ldots, \mathbf{f}_K]^{\mathsf{T}} \in \mathbb{F}^{K \times L},\) a known existing (potentially non-secure) scheme defined by the encoding matrix $\mathbf{E} = [\mathbf{e}_1, \mathbf{e}_2, \ldots, \mathbf{e}_N]^{\mathsf{T}} \in \mathbb{F}^{N \times L}$ and the decoding matrix $\mathbf{D} = [\mathbf{d}_1, \mathbf{d}_2, \ldots, \mathbf{d}_K]^{\mathsf{T}} \in \mathbb{F}^{K \times N}$. We also assume that each server $n$ has computed the outputs (from the set $\{W_1, \ldots, W_L\}$) as this is defined by the support of their corresponding row $\mathbf{e}_n$ of $\mathbf{E}$. We now focus on describing the procedure that transforms any admissible scheme into a scheme with data secrecy.

\textbf{Step 1:}  
Given the above decomposition $\mathbf{F} = \mathbf{D} \mathbf{E},$ the master node checks whether $\mathbf{D}$ satisfies the rank condition in~\eqref{rank_condition}. If the condition is not satisfied, a different factorization is chosen.

\textbf{Step 2:}  
The master node computes a basis for $\mathrm{Null}(\mathbf{D}).$ 

\textbf{Step 3:}  
The master node arranges the basis vectors of $\mathrm{Null}(\mathbf{D})$ as columns of a matrix $\mathbf{C}$ and constructs the augmented encoding matrix $$\tilde{\mathbf{E}} = [\mathbf{E}, \mathbf{C}]
= [\tilde{\mathbf{e}}_1, \tilde{\mathbf{e}}_2, \ldots, \tilde{\mathbf{e}}_N]^{\mathsf{T}}.
$$

\textbf{Step 4:} 
Using shared common randomness $\mathcal{C}$, the servers generate the independent random datasets
$C_1, C_2, \ldots, C_{N-K}$, which are common across the servers.  In the case of operating over $\mathbb{R}$, these are distributed as $C_i \sim \mathcal{N}(0,\sigma_c^2)$ for some sufficiently large $\sigma^2_c$, and in the case of $\mathbf{GF}(q)$, these are chosen with uniform distribution to match the entropy of the original output messages.  The augmented message vector 
$$\tilde{\mathbf{w}}
= [W_1, \ldots, W_L, C_1, \ldots, C_{N-K}]^{\mathsf{T}}$$
is formed. 

\textbf{Step 5:}  
Each server $n \in [N]$ computes
\(
A_n = \tilde{\mathbf{e}}_n^{\mathsf{T}} \tilde{\mathbf{w}},
\)
and transmits the result to all users in $\tau_n$, where we recall that each $\tau_n$ is defined by the support of the $n$-th column of $\mathbf{D}$.  We also recall that, by design, the support of each $\tilde{\mathbf{e}}_n$ defines which of the output datasets $\{W_1, \ldots, W_L\}$ and the random dataset $\{C_1, \ldots, C_{N-K}\}$ are needed at each server $n$ in order to generate $A_n$. 

\textbf{Step 6:} 
Each user $k \in [K]$ computes $\mathbf{d}_k^{\mathsf{T}}\mathbf{A}$, with 
$$\mathbf{A} = [A_1, \ldots, A_N]^{\mathsf{T}},$$ 
to recover its requested linear combination. Again, let us recall that the available and required elements of $\mathbf{A}$ that are needed by user $k$ are defined by the support $\mathrm{Sup}(\mathbf{d}^{\mathsf{T}}_k)$.

To complete the achievability proof of Theorem~\ref{sufficient_condition}, we now show that this scheme satisfies the correctness condition~\eqref{correctness_condition} and the data secrecy conditions~\eqref{data_secrecy_condition} and~\eqref{Privacy_con}.

\subsection{Achievability Proof of 
Theorem~\ref{sufficient_condition}} 
\subsubsection{Verifying correctness}
From Steps 1–3, $$\mathbf{D}\tilde{\mathbf{E}}
= \mathbf{D}[\mathbf{E}, \mathbf{C}]
= [\mathbf{F}, \mathbf{0}],$$
since $\mathbf{D}\mathbf{E} = \mathbf{F}$ and each column of $\mathbf{C}$ lies in $\mathrm{Null}(\mathbf{D})$. Hence, the correctness condition in~\eqref{correctness_condition} holds. 

\subsubsection{Verifying data secrecy} We proceed by contradiction. Suppose that there exists a user $k \in [K]$ who can recover two linearly independent message-bearing linear combinations $\langle \mathbf{f}_k, \mathbf{w} \rangle$ and $\langle \tilde{\mathbf{f}}_k, \mathbf{w} \rangle.$ Since  the submatrix $\mathbf{E(\mathrm{Sup(\mathbf{d}_k^{\mathsf{T}}),:})}$ is full rank, any decoding vector that cancels the common randomness necessarily produces distinct nonzero linear combination of the messages.
So, there exist two linearly independent decoding vectors
$\mathbf{a}_k^{\mathsf{T}}$ and $\tilde{\mathbf{a}}_k^{\mathsf{T}}$ that cancel the randomness and yield nonzero linear combinations of the messages, i.e.,
$$\langle \mathbf{f}_k, \mathbf{w} \rangle=\mathbf{a}_k^{\mathsf{T}}\mathbf{Ew}\neq\tilde{\mathbf{a}}_k^{\mathsf{T}}\mathbf{Ew}=\langle \tilde{\mathbf{f}}_k, \mathbf{w} \rangle,$$
and
$$\mathbf{a}_k^{\mathsf{T}}\mathbf{Cc}=\tilde{\mathbf{a}}_k^{\mathsf{T}}\mathbf{Cc}=0.$$ 

By construction, $\mathbf{C}$ has full-rank and both decoding vectors
$\mathbf{a}_k^{\mathsf{T}}$ and $\tilde{\mathbf{a}}_k^{\mathsf{T}}$ belong to the row span of the decoding matrix, i.e., $$\mathbf{a}_k^{\mathsf{T}}, \tilde{\mathbf{a}}_k^{\mathsf{T}} \in 
\mathrm{Span}\{\mathbf{d}_1^{\mathsf{T}}, \mathbf{d}_2^{\mathsf{T}}, \ldots, \mathbf{d}_K^{\mathsf{T}}\}.$$
Hence, there exist linearly independent vectors
$\mathbf{b}_k, \tilde{\mathbf{b}}_k \in \mathbb{F}^K$
such that $$\mathbf{a}_k^{\mathsf{T}} = \mathbf{b}_k^{\mathsf{T}} \mathbf{D}, \tilde{\mathbf{a}}_k^{\mathsf{T}} = \tilde{\mathbf{b}}_k^{\mathsf{T}} \mathbf{D}.$$
Moreover, since both decoding vectors are supported only on the servers accessible to the user~$k$, we have $$\mathrm{Sup}(\mathbf{a}_k^{\mathsf{T}}), \mathrm{Sup}(\tilde{\mathbf{a}}_k^{\mathsf{T}})
\subseteq \mathrm{Sup}(\mathbf{d}^{\mathsf{T}}_k).$$ For $$\mathbf{D}_{\mathrm{Red},k}
= \mathbf{D} \setminus \mathbf{D}(:, \mathrm{Sup}(\mathbf{d}^{\mathsf{T}}_k)),$$ obtained by removing the columns of $\mathbf{D}$ indexed by $\mathrm{Sup}(\mathbf{d}^{\mathsf{T}}_k)$, we directly have that $$\mathbf{b}_k^{\mathsf{T}} \mathbf{D}_{\mathrm{Red},k} = \mathbf{0}^{\mathsf{T}}, \tilde{\mathbf{b}}_k^{\mathsf{T}} \mathbf{D}_{\mathrm{Red},k} = \mathbf{0}^{\mathsf{T}}.$$
Since $\mathbf{b}_k^{\mathsf{T}}$ and $\tilde{\mathbf{b}}_k^{\mathsf{T}}$ are linearly independent, the null space of $\mathbf{D}_{\mathrm{Red},k}$ has dimension at least two, implying 
\begin{align}
\mathrm{Rank}\bigl(\mathbf{D}_{\mathrm{Red},k}\bigr) \le K - 2,
\label{achiv_proof_Theorem1}
\end{align}
contradicts the rank condition in~\eqref{rank_condition}, which is verified in Step~1 of the construction. Therefore, no user can recover more than a single linear combination of the messages, and the data secrecy requirement in~\eqref{data_secrecy_condition} is satisfied. When computations are performed over $\mathbb{R}$, the information-theoretic leakage bound in~\eqref{real_data_leakage} applies, as proved in Appendix~\ref{proof_ds_real}.

This completes the achievable proof of Theorem~\ref{sufficient_condition}.

\subsection{ Some forms of the decoding matrix $\mathbf{D}$}
To induce data secrecy, the decoding matrix $\mathbf{D}$ must satisfy the condition stated in Theorem~\ref{sufficient_condition}. Several non-trivial classes of matrices naturally satisfy this condition. We highlight a few representative examples. 
\begin{itemize}
\item  \textbf{Systematic form $\mathbf{D} = [\mathbf{I}_K\mid\mathbf{P}]$:}  
    Here, $\mathbf{D}$ is a $K \times N$ matrix, where $\mathbf{I}_K$ denotes the $K \times K$ identity matrix. In $\mathbf{GF}(q)$, $\mathbf{D}$ can be interpreted as the parity-check matrix of a systematic linear code. As before, each column $\mathbf{E}(:,i)$ may then be viewed as a lowest-weight coset leader corresponding to the syndrome $\mathbf{F}(:,i)$~\cite{khalesi2023multi}, corresponding to the same code. Interestingly, the aforementioned basis vectors of $\mathrm{Null}(\mathbf{D})$ naturally form the generator matrix of the linear code whose parity-check matrix is $\mathbf{D}$. Consequently, the common randomness matrix can be chosen as
    $\mathbf{C} = [-\mathbf{P}^{\mathsf{T}} \mid \mathbf{I}_{N-K}]^{\mathsf{T}}.$ This finally tells us that any decoding matrix $\mathbf{D}$ can be transformed to its systematic form (with a possible change in $\gamma,\delta$) in order to guarantee the secrecy structural condition of Theorem~\ref{sufficient_condition}.
\item \textbf{Cyclic code structure:}  
    The matrix $\mathbf{D}$ is the generator matrix of a cyclic code in circulant form. Thus, for each user $k \in [K]$, 
$\mathbf{D} \setminus \mathbf{D}(:, \mathrm{Sup}(\mathbf{d}^{\mathsf{T}}_k))$ 
forms an upper trapezoidal matrix with $K-1$ nonzero diagonal entries. 
Hence, $\mathrm{Rank}\Bigl(\mathbf{D} \setminus \mathbf{D}(:, \mathrm{Sup}(\mathbf{d}^{\mathsf{T}}_k))\Bigr) \ge K-1,$
satisfying the condition stated in Theorem~\ref{sufficient_condition}.
In this case, the corresponding parity-check matrix is also circulant.
\item \textbf{Identity matrix $\mathbf{D}=\mathbf{I}_K$:}  
    Here, $\mathbf{D}$ is a $K \times K$ identity matrix. Each user directly receives their requested linear combination from a single assigned server, but this naturally requires maximal $\gamma = 1$. 
\end{itemize}

\section{Conclusions}
\label{section_conclusions}
In this paper, we have studied the problem of secure multi-user linearly-separable distributed computing, providing an information-theoretic framework to guarantee that each user can recover only its requested linear combination while learning nothing else. We established a necessary and sufficient condition on the decoding process, which ensures data secrecy over both finite fields $\mathbf{GF}(q)$ and real numbers $\mathbb{R}$, and presented an achievable scheme that preserves correctness, as well as communication and computation costs. 

Our work is motivated by the observation that many powerful distributed computing frameworks, including those of the multi-user linearly-separable nature, do not inherently provide data secrecy guarantees, and may thus leak unintended information through aggregation or side information. Such breach of data secrecy thus brings to the fore the urgent need for novel designs that maintain a good degree of performance, while guaranteeing various degrees of secrecy. This work provides the first such approach in the multi-user linearly-separable distributed computing setting, and does so by characterizing the precise algebraic and rank-based requirements for secrecy, thus laying foundations for secure designs in large-scale multi-user distributed analytics and computation.  

\section*{Acknowledgment}
The authors would like to thank Mojtaba Tefagh for careful proofreading of the manuscript and for insightful discussions that helped improve the clarity of this work.
\begin{appendices}
\section{Proof of Lemma~\ref{rank_lemma}}
\label{proof_rank_user}
Suppose that user $k$ is connected to $w_{\mathsf{H}}(\mathbf{d}_k^{\mathsf{T}})$ servers, and
\[
\mathrm{Rank}\bigl(\mathbf{C}(\mathrm{Sup}(\mathbf{d}_k^{\mathsf{T}}), :)\bigr)
< w_{\mathsf{H}}(\mathbf{d}_k^{\mathsf{T}}) - 1.
\]
Since $(\mathbf{C}(\mathrm{Sup}(\mathbf{d}_k^{\mathsf{T}}), :))^{\mathsf{T}} \in \mathbb{F}^{(N-K)\times w_{\mathsf{H}}(\mathbf{d}_k^{\mathsf{T}})}$, the rank--nullity theorem implies
\begin{align}
\dim\!\left(\mathrm{Null}\bigl((\mathbf{C}(\mathrm{Sup}(\mathbf{d}_k^{\mathsf{T}}), :))^{\mathsf{T}}\bigr)\right)
&= w_{\mathsf{H}}(\mathbf{d}_k^{\mathsf{T}})\nonumber\\
&- \mathrm{Rank}\bigl(\mathbf{C}(\mathrm{Sup}(\mathbf{d}_k^{\mathsf{T}}), :)\bigr)\nonumber\\
&> 1.
\end{align}

Consequently, there exist at least two non-zero linearly independent vectors 
$\mathbf{a}_k^{\mathsf{T}}$ and $\tilde{\mathbf{a}}_k^{\mathsf{T}}$ supported on $\mathrm{Sup}(\mathbf{d}^{\mathsf{T}}_k),$  $$\mathrm{Sup}(\mathbf{a}_k^{\mathsf{T}}), \mathrm{Sup}(\tilde{\mathbf{a}}_k^{\mathsf{T}})
\subseteq \mathrm{Sup}(\mathbf{d}^{\mathsf{T}}_k),$$
such that
$$(\mathbf{C}(\mathrm{Sup}(\mathbf{d}^{\mathsf{T}}_k), :))^{\mathsf{T}}\mathbf{a}_k=(\mathbf{C}(\mathrm{Sup}(\mathbf{d}^{\mathsf{T}}_k), :))^{\mathsf{T}}\tilde{\mathbf{a}}_k=\mathbf{0}.$$ 
or equivalently,
$$\mathbf{a}_k^{\mathsf{T}}\mathbf{C}(\mathrm{Sup}(\mathbf{d}^{\mathsf{T}}_k), :) =\tilde{\mathbf{a}}_k^{\mathsf{T}}\mathbf{C}(\mathrm{Sup}(\mathbf{d}^{\mathsf{T}}_k), :)=\mathbf{0}^{\mathsf{T}}.$$ 
Since the submatrix $\mathbf{E(\mathrm{Sup(\mathbf{d}_k^{\mathsf{T}}),:})}$ is full rank, any decoding vector that cancels the common randomness necessarily produces distinct nonzero linear combination of the messages, so
\begin{align}
\langle \mathbf{f}_k, \mathbf{w} \rangle
&=\mathbf{a}_k^{\mathsf{T}}\mathbf{E(\mathrm{Sup(\mathbf{d}_k^{\mathsf{T}}),:})}\mathbf{w}\nonumber\\
&\neq\tilde{\mathbf{a}}_k^{\mathsf{T}}\mathbf{E(\mathrm{Sup(\mathbf{d}_k^{\mathsf{T}}),:})}\mathbf{w}=\langle \tilde{\mathbf{f}}_k, \mathbf{w} \rangle.
\end{align}

Therefore, after eliminating the randomness, user~$k$ can recover a message-bearing subspace of dimension greater than one. This implies that user~$k$ can decode more than a single linear combination of messages, and data secrecy cannot be guaranteed for user $k.$

\section{Proof of Theorem~\ref{Theorem_necessary_condition}}
\label{proof_delta_bound}
With Lemma~\ref{rank_lemma} in place, we can derive the following converse bound, which limits the number of servers accessible to each user over which data secrecy is impossible.

\begin{Corollary}
\label{Remark_access_upper}
In a multi-user linearly separable distributed computing scheme, if
\begin{align}
w_{\mathsf{H}}(\mathbf{d}_k^{\mathsf{T}}) > N - K + 1,
\label{access_requirement}
\end{align}
user $k$ can then obtain information beyond its requested linear combination.
\begin{IEEEproof}
We prove the statement by contradiction. Suppose that data secrecy holds for the user $k$. Then, by Lemma~\ref{rank_lemma},
\begin{align}
    w_{\mathsf{H}}(\mathbf{d}_k^{\mathsf{T}})-1
\leq \mathrm{Rank}\!\big(\mathbf{C}(\mathrm{Sup}(\mathbf{d}_k^{\mathsf{T}}),:)\big) .
\end{align}
Under assumption~\eqref{access_requirement}, this implies
\begin{align}
\label{Eq_app_1}
N-K<\mathrm{Rank}\!\big(\mathbf{C}(\mathrm{Sup}(\mathbf{d}_k^{\mathsf{T}}),:)\big).
\end{align}
However,
\begin{align}
\label{Eq_app_2}
\mathrm{Rank}\!\big(\mathbf{C}(\mathrm{Sup}(\mathbf{d}_k^{\mathsf{T}}),:)\big)
\leq \mathrm{Rank}(\mathbf{C}).
\end{align}
Moreover, the correctness condition~\eqref{correctness_condition} requires $\mathbf{D}\mathbf{C} = \mathbf{0}$,
which implies that the column space of $\mathbf{C}$ lies in $\mathrm{Null}(\mathbf{D})$. Since $\mathbf{D}$ has rank $K$, it follows that
\begin{align}
\label{Eq_app_3}
\mathrm{Rank}(\mathbf{C}) \leq N - K.
\end{align}
Combining~\eqref{Eq_app_1}, \eqref{Eq_app_2}, and~\eqref{Eq_app_3}, we obtain
\[
N - K < \mathrm{Rank}\!\big(\mathbf{C}(\mathrm{Sup}(\mathbf{d}_k^{\mathsf{T}}),:)\big)
\leq \mathrm{Rank}(\mathbf{C}) \leq N - K,
\]
which is a contradiction. Hence, data secrecy cannot hold, and the user $k$ obtains information beyond its requested linear combination.
\end{IEEEproof}
\end{Corollary}

Based on Corollary~\ref{Remark_access_upper}, data secrecy for each user $k \in [K]$ requires
\[
w_{\mathsf{H}}(\mathbf{d}^{\mathsf{T}}_k) \leq N - K + 1.
\]
Summing over all users yields
\[
\sum_{k=1}^K w_{\mathsf{H}}(\mathbf{d}^{\mathsf{T}}_k)
\leq K(N - K + 1)
= NK - K^2 + K.
\]
Dividing both sides by $KN$ and using the definition of communication cost in~\eqref{delta_def}, we obtain
\[
\delta
= \frac{1}{KN} \sum_{k=1}^K w_{\mathsf{H}}(\mathbf{d}^{\mathsf{T}}_k)
\leq 1 - \frac{K-1}{N}.
\]
Therefore, if
\[
\delta > 1 - \frac{K-1}{N},
\]
then the above bound is violated, and data secrecy cannot be guaranteed. In particular, there exists at least one user that can recover information beyond its requested linear combination. This completes the proof of Theorem~\ref{Theorem_necessary_condition}.

\section{Converse Proof of Theorem~\ref{sufficient_condition}}
First, we show that matrix $\mathbf{D}$ satisfies the correctness condition~\eqref{correctness_condition}, and then we prove that it is necessary to guarantee data secrecy as per~\eqref{data_secrecy_condition} and~\eqref{Privacy_con}.

\textbf{Correctness:}
Since the matrices $\mathbf{D}$ and $\mathbf{E}$ are obtained from a (possibly non-secure) multi-user distributed computing scheme, they satisfy matrix factorization $\mathbf{F} = \mathbf{D}\mathbf{E}$. Consequently, each user can correctly reconstruct its requested linear combination, and the correctness condition~\eqref{correctness_condition} of the scheme is preserved.
 
Now, we show that Theorem~\ref{sufficient_condition} provides a necessary condition to guarantee data secrecy.

\textbf{Data Secrecy: }We prove this by contradiction. Suppose that there exists a user
$k \in [K]$ such that
$$\mathrm{Rank}\!\left(\mathbf{D} \setminus \mathbf{D}(:,\mathrm{Sup}({\mathbf{d}^{\mathsf{T}}_k}))\right)
    \leq K-2.$$ Define the reduced decoding matrix for user $k$ as $$\mathbf{D}_{\mathrm{Red},k} \triangleq \mathbf{D} \setminus \mathbf{D}(:,\mathrm{Sup}(\mathbf{d}^{\mathsf{T}}_k)).$$
Then $\mathbf{D}_{\mathrm{Red},k}^{\mathsf{T}} \in \mathbb{F}^{(N-w_{\mathsf{H}}(\mathbf{d}^{\mathsf{T}}_k))\times K}$.

By the rank--nullity theorem,
$$\mathrm{dim}\!\left(\mathrm{Null}(\mathbf{D}_{\mathrm{Red},k}^{\mathsf{T}})\right)
= K - \mathrm{Rank}(\mathbf{D}_{\mathrm{Red},k}^{\mathsf{T}})
\geq K - (K-2) = 2.$$
Therefore, there exist at least two linearly independent nonzero vectors
$\mathbf{b}_k, \tilde{\mathbf{b}}_k \in \mathbb{F}^K$ such that $$\mathbf{D}_{\mathrm{Red},k}^{\mathsf{T}} \mathbf{b}_k = \mathbf{0}, \mathbf{D}_{\mathrm{Red},k}^{\mathsf{T}} \tilde{\mathbf{b}}_k = \mathbf{0},$$
or, equivalently, $$\mathbf{b}_k^{\mathsf{T}} \mathbf{D}_{\mathrm{Red},k} = \mathbf{0}^{\mathsf{T}},
    \tilde{\mathbf{b}}_k^{\mathsf{T}} \mathbf{D}_{\mathrm{Red},k} = \mathbf{0}^{\mathsf{T}}.$$

Define $$\mathbf{a}_k^{\mathsf{T}} \triangleq \mathbf{b}_k^{\mathsf{T}} \mathbf{D}, \tilde{\mathbf{a}}_k^{\mathsf{T}} \triangleq \tilde{\mathbf{b}}_k^{\mathsf{T}} \mathbf{D}.$$
Since $\mathbf{b}_k$ and $\tilde{\mathbf{b}}_k$ are linearly independent and $\mathbf{D}$ is full rank, the vectors
$\mathbf{a}_k^{\mathsf{T}}$ and $\tilde{\mathbf{a}}_k^{\mathsf{T}}$ are linearly
independent. Moreover,
$$
\mathrm{Sup}(\mathbf{a}_k^{\mathsf{T}}) ,\mathrm{Sup}(\tilde{\mathbf{a}}_k^{\mathsf{T}}) \subseteq \mathrm{Sup}(\mathbf{d}^{\mathsf{T}}_k),
$$
which implies that the user $k$ can compute linear combinations
$
\tilde{\mathbf{a}}_k^{\mathsf{T}} \mathbf{E}\mathbf{w}
$ and
$
\mathbf{a}_k^{\mathsf{T}} \mathbf{E}\mathbf{w}
$ from the same set of received responses $\mathcal{A}_k$. Since the submatrix $\mathbf{E(\mathrm{Sup(\mathbf{d}_k^{\mathsf{T}}),:})}$ is full rank, $
\tilde{\mathbf{a}}_k^{\mathsf{T}} \mathbf{E}\mathbf{w}
$ and
$
\mathbf{a}_k^{\mathsf{T}} \mathbf{E}\mathbf{w}
$
are linearly independent. So, user $k$ can learn information beyond its
request, violating the data secrecy constraints~\eqref{data_secrecy_condition} and~\eqref{Privacy_con}. Therefore, to guarantee data secrecy, it is necessary that
$$
\mathrm{Rank}\!\left(\mathbf{D} \setminus \mathbf{D}(:,\mathrm{Sup}({\mathbf{d}^{\mathsf{T}}_k}))\right)
\geq K-1.$$
\label{proof_ds_finite}

\section{Proof of Information Leakage over $\mathbb{R}$ of Theorem~\ref{sufficient_condition}}
\label{proof_ds_real}
To guarantee data secrecy in computing over the real field, we show using~\eqref{Privacy_con} that for any $\varepsilon>0$, there exists a variance $\sigma_c^2$ for the i.i.d.\ Gaussian common randomness $\mathcal{C}$, with $C_i\sim\mathcal{N}(0,\sigma_c^2)$, such that for every user $k\in[K]$,
$$I\!\left(\mathcal{W};\mathcal{A}_k,\mathbf{f}_k \mid \langle \mathbf{f}_k,\mathbf{w}\rangle\right)\leq \varepsilon .$$

From~\eqref{query_indepence}, we have
\begin{align}
    I\!\left(\mathcal{W};\mathcal{A}_k,\mathbf{f}_k \mid \langle \mathbf{f}_k,\mathbf{w}\rangle\right)
&= I\!\left(\mathcal{W};\mathcal{A}_k \mid \mathbf{f}_k,\langle \mathbf{f}_k,\mathbf{w}\rangle\right)\nonumber\\
&= h(\mathcal{A}_k \mid \mathbf{f}_k,\langle \mathbf{f}_k,\mathbf{w}\rangle)\nonumber\\
  & - h(\mathcal{A}_k \mid \mathbf{f}_k,\mathcal{W}).
  \label{proof_leakage_entrpy}
\end{align}

We now compute the two terms in~\eqref{proof_leakage_entrpy}.

\textbf{Step 1 (Computing $h(\mathcal{A}_k \mid \mathbf{f}_k,\langle \mathbf{f}_k,\mathbf{w}\rangle)$):}
Without loss of generality, assume $\mathrm{Sup}(\mathbf{d}^{\mathsf{T}}_k)=[w_{\mathsf{H}}(\mathbf{d}^{\mathsf{T}}_k)]$.
Applying the chain rule for differential entropy yields 
\begin{align}
    h(\mathcal{A}_k \mid \mathbf{f}_k,\langle \mathbf{f}_k,\mathbf{w}\rangle)
&= h\!\left(A_1 \mid \mathcal{A}_k\setminus A_1, \mathbf{f}_k,\langle \mathbf{f}_k,\mathbf{w}\rangle\right)\nonumber\\
&+h\!\left(\mathcal{A}_k\setminus A_1 \mid \mathbf{f}_k,\langle \mathbf{f}_k,\mathbf{w}\rangle\right).
\end{align}
Since the correctness condition~\eqref{correctness_condition} implies one linear relation among the answers $\mathcal{A}_k$, without loss of generality, the first response is fully determined given the other ones and the desired function, and hence $$h\!\left(A_1 \mid A_2,\ldots,A_{w_{\mathsf{H}}(\mathbf{d}^{\mathsf{T}}_k)},\mathbf{f}_k,\langle \mathbf{f}_k,\mathbf{w}\rangle\right)=0.$$
Since conditioning does not increase differential entropy, we obtain $$ h(\mathcal{A}_k \mid \mathbf{f}_k,\langle \mathbf{f}_k,\mathbf{w}\rangle)= h\!\left(\mathcal{A}_k\setminus A_1 \mid \mathbf{f}_k,\langle \mathbf{f}_k,\mathbf{w}\rangle\right)
\leq h\!\left(\mathcal{A}_k\setminus A_1\right).$$

From~\eqref{answer_func}, each server response $A_n$ is a linear combination of independent message symbols with variance $\sigma_w^2$ in $\mathbf{w}=[W_1, W_2,\ldots, W_L]^{\mathsf{T}}$
and independent Gaussian random variables with variance $\sigma_c^2$ in $\mathbf{c}=[C_1, C_2,\ldots, C_{N-K}]^{\mathsf{T}}$.
Therefore,
\begin{align}
    h(\mathcal{A}_k\setminus A_1)= h(\mathbf{X}_k\mathbf{w}+\mathbf{Y}_{k}\mathbf{c}),
\end{align}
where
\begin{align}
    \mathbf{X}_{k} = \mathbf{E}([2:w_{\mathsf{H}}(\mathbf{d}^{\mathsf{T}}_k)],:), \mathbf{Y}_k = \mathbf{C}([2:w_{\mathsf{H}}(\mathbf{d}^{\mathsf{T}}_k)],:).
\end{align}
 Since $\mathbf{w}$ and $\mathbf{c}$ are independent random vectors, we have
\begin{align}
    \Sigma_{\mathbf{X}_k\mathbf{w}+\mathbf{Y}_k\mathbf{c}}
    &=\mathrm{Cov}(\mathbf{X}_k\mathbf{w}+\mathbf{Y}_k\mathbf{c})\nonumber\\
    &= \mathrm{Cov}(\mathbf{X}_k\mathbf{w})+\mathrm{Cov}(\mathbf{Y}_k\mathbf{c})\nonumber\\
    &=\sigma^2_w\mathbf{X}_k\mathbf{X}_k^{\mathsf{T}}+ \sigma^2_c\mathbf{Y}_k\mathbf{Y}_k^{\mathsf{T}}.
\end{align}
Since $\mathbf{X}_k\mathbf{w}+\mathbf{Y}_k\mathbf{c}$ is a Gaussian random vector, we have
\begin{align}
    h(\mathcal{A}_k) &= \frac{1}{2}\log((2\pi e)^{w_{\mathsf{H}}(\mathbf{d}^{\mathsf{T}}_k)-1}|\Sigma_{\mathbf{X}_k\mathbf{w}+\mathbf{Y}_k\mathbf{c}}|)\nonumber\\
    &= \frac{1}{2}\log((2\pi e)^{w_{\mathsf{H}}(\mathbf{d}^{\mathsf{T}}_k)-1}|\sigma^2_w\mathbf{X}_k\mathbf{X}_k^{\mathsf{T}}+ \sigma^2_c\mathbf{Y}_k\mathbf{Y}_k^{\mathsf{T}}|),
\end{align}
where $|\Sigma_{\mathbf{X}_k\mathbf{w}+\mathbf{Y}_k\mathbf{c}}|$ is the determinant of matrix $\Sigma_{\mathbf{X}_k\mathbf{w}+\mathbf{Y}_k\mathbf{c}}.$
\textbf{Step 2 (Computing $h(\mathcal{A}_k \mid \mathbf{f}_k,\mathcal{W})$):} 
To proceed, we first establish a lower bound on the rank of the visible common randomness for each user.

\begin{Lemma}
\label{Lemma_rank_c}
In the proposed general scheme in Section~\ref{Achievable_algorithm}, for each user $k \in [K]$, we have
\begin{align}
\mathrm{Rank}\!\big(\mathbf{C}(\mathrm{Sup}(\mathbf{d}_k^{\mathsf{T}}),:)\big)
\geq w_{\mathsf{H}}(\mathbf{d}_k^{\mathsf{T}})-1.
\end{align}
\end{Lemma}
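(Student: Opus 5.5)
The plan is to compute the left null space of $\mathbf{C}(\mathrm{Sup}(\mathbf{d}_k^{\mathsf{T}}),:)$ exactly and show it has dimension at most one; rank--nullity then gives the claim. Write $\mathcal{S}=\mathrm{Sup}(\mathbf{d}_k^{\mathsf{T}})$ and $w=w_{\mathsf{H}}(\mathbf{d}_k^{\mathsf{T}})=|\mathcal{S}|$. By rank--nullity,
\[
\mathrm{Rank}\bigl(\mathbf{C}(\mathcal{S},:)\bigr)=w-\dim\mathcal{L}_k,
\qquad
\mathcal{L}_k:=\{\mathbf{a}\in\mathbb{F}^N:\ \mathrm{Sup}(\mathbf{a}^{\mathsf{T}})\subseteq\mathcal{S},\ \mathbf{a}^{\mathsf{T}}\mathbf{C}=\mathbf{0}^{\mathsf{T}}\}.
\]
Here I identify vectors on $\mathcal{S}$ with vectors in $\mathbb{F}^N$ that vanish off $\mathcal{S}$. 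So it suffices to show $\dim\mathcal{L}_k\le 1$.

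\textbf{Step 1: identify the annihilator of the columns of $\mathbf{C}$.} In the general scheme, the columns of $\mathbf{C}$ form a basis of $\mathrm{Null}(\mathbf{D})$. Since $\mathbf{D}$ has full row rank $K$, this null space has dimension $N-K$. I claim that $\{\mathbf{a}:\mathbf{a}^{\mathsf{T}}\mathbf{C}=\mathbf{0}^{\mathsf{T}}\}$ equals the row space of $\mathbf{D}$.
\begin{itemize}
\item Inclusion: every row $\mathbf{b}^{\mathsf{T}}\mathbf{D}$ satisfies $\mathbf{b}^{\mathsf{T}}\mathbf{D}\mathbf{C}=\mathbf{0}^{\mathsf{T}}$, because $\mathbf{D}\mathbf{C}=\mathbf{0}$.
\item Dimensions: the left-side set is the null space of $\mathbf{C}^{\mathsf{T}}$, which has dimension $N-(N-K)=K$. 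The row space of $\mathbf{D}$ also has dimension $K$.
\end{itemize}
Inclusion plus equal dimension gives equality. This uses only dimension counting, not an inner-product orthogonality argument, so it is valid over $\mathbf{GF}(q)$ as well as $\mathbb{R}$. This is the step I expect to require the most care.

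\textbf{Step 2: transfer the support constraint to $\mathbf{D}_{\mathrm{Red},k}$.} By Step 1, every $\mathbf{a}\in\mathcal{L}_k$ can be written uniquely as $\mathbf{a}^{\mathsf{T}}=\mathbf{b}^{\mathsf{T}}\mathbf{D}$. Uniqueness holds because $\mathbf{b}\mapsto\mathbf{b}^{\mathsf{T}}\mathbf{D}$ is injective when $\mathbf{D}$ has full row rank. The condition $\mathrm{Sup}(\mathbf{a}^{\mathsf{T}})\subseteq\mathcal{S}$ says exactly that $\mathbf{b}^{\mathsf{T}}\mathbf{D}$ vanishes on the columns outside $\mathcal{S}$, i.e. $\mathbf{b}^{\mathsf{T}}\mathbf{D}_{\mathrm{Red},k}=\mathbf{0}^{\mathsf{T}}$. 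Hence $\mathcal{L}_k$ is isomorphic to the left null space of $\mathbf{D}_{\mathrm{Red},k}$, and
\[
\dim\mathcal{L}_k=K-\mathrm{Rank}(\mathbf{D}_{\mathrm{Red},k}).
\]

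\textbf{Step 3: conclude.} Step~1 of the general scheme enforces the rank condition \eqref{rank_condition}, namely $\mathrm{Rank}(\mathbf{D}_{\mathrm{Red},k})\ge K-1$. Therefore $\dim\mathcal{L}_k\le 1$. Substituting into the rank--nullity identity above gives $\mathrm{Rank}(\mathbf{C}(\mathcal{S},:))\ge w-1$, which is the statement of the lemma.

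Finally, $\mathbf{d}_k$ itself lies in $\mathcal{L}_k$ and is nonzero, so $\dim\mathcal{L}_k=1$ and in fact equality $\mathrm{Rank}(\mathbf{C}(\mathcal{S},:))=w-1$ holds. This matches the ranks computed in Example~\ref{example_sec}.
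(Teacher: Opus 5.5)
Your proof is correct and follows essentially the same route as the paper's, which argues by contradiction. The paper takes two independent vectors supported on $\mathrm{Sup}(\mathbf{d}_k^{\mathsf{T}})$ that annihilate $\mathbf{C}$, notes they lie in $\mathrm{Null}(\mathbf{C}^{\mathsf{T}})$, which equals the row space of $\mathbf{D}$, and concludes $\mathrm{Rank}(\mathbf{D}_{\mathrm{Red},k})<K-1$. Your direct version also proves that null-space/row-space identity by a field-independent dimension count (the paper only asserts it), makes the injectivity of $\mathbf{b}\mapsto\mathbf{b}^{\mathsf{T}}\mathbf{D}$ explicit, and correctly sharpens the bound to equality using $\mathbf{d}_k\in\mathcal{L}_k$.
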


\begin{IEEEproof}
We prove by contradiction. Suppose that
\[
\mathrm{Rank}\!\big(\mathbf{C}(\mathrm{Sup}(\mathbf{d}_k^{\mathsf{T}}),:)\big)
< w_{\mathsf{H}}(\mathbf{d}_k^{\mathsf{T}})-1.
\]
Since 
\[
\mathbf{C}(\mathrm{Sup}(\mathbf{d}_k^{\mathsf{T}}), :) \in 
\mathbb{F}^{w_{\mathsf{H}}(\mathbf{d}_k^{\mathsf{T}})\times (N-K)},
\]
there exist two linearly independent vectors 
$\mathbf{a}_k^{\mathsf{T}}$ and $\tilde{\mathbf{a}}_k^{\mathsf{T}}$, both supported on $\mathrm{Sup}(\mathbf{d}_k^{\mathsf{T}})$, such that
\[
\mathbf{a}_k^{\mathsf{T}}\mathbf{C}(\mathrm{Sup}(\mathbf{d}_k^{\mathsf{T}}),:)
=
\tilde{\mathbf{a}}_k^{\mathsf{T}}\mathbf{C}(\mathrm{Sup}(\mathbf{d}_k^{\mathsf{T}}),:)
= \mathbf{0}^{\mathsf{T}}.
\]

Extend these vectors to $\mathbb{F}^N$ by zero-padding outside $\mathrm{Sup}(\mathbf{d}_k^{\mathsf{T}})$, yielding $\mathbf{b}_k^{\mathsf{T}}$ and $\tilde{\mathbf{b}}_k^{\mathsf{T}}$. Then
\[
\mathbf{b}_k^{\mathsf{T}}\mathbf{C} = \tilde{\mathbf{b}}_k^{\mathsf{T}}\mathbf{C} = \mathbf{0}^{\mathsf{T}},
\]
which implies that $\mathbf{b}_k, \tilde{\mathbf{b}}_k \in \mathrm{Null}(\mathbf{C}^{\mathsf{T}})$.

From the construction of the scheme (Step~3), the columns of $\mathbf{C}$ span $\mathrm{Null}(\mathbf{D})$, hence
\[
\mathrm{Null}(\mathbf{C}^{\mathsf{T}}) = \mathrm{Span}(\mathbf{D}).
\]

Therefore, $\mathbf{b}_k$ and $\tilde{\mathbf{b}}_k$ lie in the row space of $\mathbf{D}$. Since they are linearly independent and supported only on $\mathrm{Sup}(\mathbf{d}_k^{\mathsf{T}})$, this implies that
\begin{align}
\mathrm{Rank}\!\left(
\mathbf{D} \setminus \mathbf{D}(:,\mathrm{Sup}(\mathbf{d}_k^{\mathsf{T}}))
\right)
< K - 1,
\end{align}
which contradicts the condition imposed in Step~1 of the general scheme. 

Therefore,
\[
\mathrm{Rank}\!\big(\mathbf{C}(\mathrm{Sup}(\mathbf{d}_k^{\mathsf{T}}),:)\big)
\geq w_{\mathsf{H}}(\mathbf{d}_k^{\mathsf{T}})-1,
\]
which completes the proof.
\end{IEEEproof}

We now compute $h(\mathcal{A}_k \mid \mathbf{f}_k,\mathcal{W})$. Conditioned on $\mathcal{W}$, the responses depend only on the common randomness, hence
$$h(\mathcal{A}_k \mid \mathbf{f}_k,\mathcal{W})
= h\!\left(\mathbf{C}([w_{\mathsf{H}}(\mathbf{d}^{\mathsf{T}}_k)],:)\mathbf{c}\right).$$
By Lemma~\ref{Lemma_rank_c},
$\mathrm{Rank}(\mathbf{C}(\mathrm{Sup}(\mathbf{d}^{\mathsf{T}}_k),:))\geq w_{\mathsf{H}}(\mathbf{d}^{\mathsf{T}}_k)-1$,
which implies
\begin{align}
h\!\left(\mathbf{C}([w_{\mathsf{H}}(\mathbf{d}^{\mathsf{T}}_k)],:)\mathbf{c}\right)&\geq h\!\left(\mathbf{C}([2:w_{\mathsf{H}}(\mathbf{d}^{\mathsf{T}}_k)],:)\mathbf{c}\right)\nonumber\\
& = h\!\left(\mathbf{Y}_k\mathbf{c}\right).
\end{align}

Since $\mathbf{Y}_k \mathbf{c}$ is a Gaussian vector, we obtain
\begin{align}
h(\mathcal{A}_k \mid \mathbf{f}_k,\mathcal{W})
\geq \frac{1}{2}\log\!\Big((2\pi e)^{w_{\mathsf{H}}(\mathbf{d}_k^{\mathsf{T}})-1}
\big|\sigma_c^2 \mathbf{Y}_k \mathbf{Y}_k^{\mathsf{T}}\big|\Big).
\end{align}

\textbf{Step 3 (Information leakage bound):}
Combining Steps~1 and~2 yields
\begin{align}
    &I\!\left(\mathcal{W};
\mathcal{A}_k,\mathbf{f}_k \mid \langle \mathbf{f}_k,\mathbf{w}\rangle\right)\nonumber\\
&\leq \frac{1}{2}\log((2\pi e)^{w_{\mathsf{H}}(\mathbf{d}^{\mathsf{T}}_k)-1}|\sigma^2_w\mathbf{X}_k\mathbf{X}_k^{\mathsf{T}}+ \sigma^2_c\mathbf{Y}_k\mathbf{Y}_k^{\mathsf{T}}|)\nonumber\\ &- \frac{1}{2}\log((2\pi e)^{w_{\mathsf{H}}(\mathbf{d}^{\mathsf{T}}_k)-1}|\sigma^2_c\mathbf{Y}_k\mathbf{Y}_k^{\mathsf{T}}|)\nonumber\\ &= \frac{1}{2}\log(|\frac{\sigma^2_w\mathbf{X}_k\mathbf{X}_k^{\mathsf{T}}+ \sigma^2_c\mathbf{Y}_k\mathbf{Y}_k^{\mathsf{T}}}{\sigma^2_c\mathbf{Y}_k\mathbf{Y}_k^{\mathsf{T}}}|)
\end{align}
Note that $\mathbf{Y}_k = \mathbf{C}([2:w_{\mathsf{H}}(\mathbf{d}^{\mathsf{T}}_k)],:)$ is full rank, so $\mathbf{Y}_k\mathbf{Y}_k^{\mathsf{T}}$ is invertible, thus
\begin{align}
&I\!\left(\mathcal{W};
\mathcal{A}_k,\mathbf{f}_k \mid \langle \mathbf{f}_k,\mathbf{w}\rangle\right)\nonumber\\
&\leq \frac{1}{2}\log(|\mathbf{I}_{{w_{\mathsf{H}}(\mathbf{d}^{\mathsf{T}}_k)-1}}+\frac{\sigma_w^2}{\sigma_c^2}\mathbf{X}_k\mathbf{X}_k^{\mathsf{T}}(\mathbf{Y}_k\mathbf{Y}_k^{\mathsf{T}})^{-1}|).
\end{align}
The determinant $|\mathbf{I}_{{w_{\mathsf{H}}(\mathbf{d}^{\mathsf{T}}_k)-1}}+\frac{\sigma_w^2}{\sigma_c^2}\mathbf{X}_k\mathbf{X}_k^{\mathsf{T}}(\mathbf{Y}_k\mathbf{Y}_k^{\mathsf{T}})^{-1}|$ is equal to $\prod\limits_{i=1}^{w_{\mathsf{H}}(\mathbf{d}^{\mathsf{T}}_k)-1}(1+\lambda_i(\frac{\sigma_w^2}{\sigma_c^2}\mathbf{X}_k\mathbf{X}_k^{\mathsf{T}}(\mathbf{Y}_k\mathbf{Y}_k^{\mathsf{T}})^{-1})),$ where for each $i \in [1:{{w_{\mathsf{H}}(\mathbf{d}^{\mathsf{T}}_k)}}-1]$, $\lambda_i(\frac{\sigma_w^2}{\sigma_c^2}\mathbf{X}_k\mathbf{X}_k^{\mathsf{T}}(\mathbf{Y}_k\mathbf{Y}_k^{\mathsf{T}})^{-1})$ is the eigenvalue of the matrix $\frac{\sigma_w^2}{\sigma_c^2}\mathbf{X}_k\mathbf{X}_k^{\mathsf{T}}(\mathbf{Y}_k\mathbf{Y}_k^{\mathsf{T}})^{-1}.$ Thus
\begin{align}
    &I\!\left(\mathcal{W};
\mathcal{A}_k,\mathbf{f}_k \mid \langle \mathbf{f}_k,\mathbf{w}\rangle\right)\nonumber\\
&\leq \frac{1}{2}\sum\limits_{i=1}^{{w_{\mathsf{H}}(\mathbf{d}^{\mathsf{T}}_k)-1}}\log(1+\lambda_i(\frac{\sigma_w^2}{\sigma_c^2}\mathbf{X}_k\mathbf{X}_k^{\mathsf{T}}(\mathbf{Y}_k\mathbf{Y}_k^{\mathsf{T}})^{-1}))\nonumber\\
&\leq \frac{{w_{\mathsf{H}}(\mathbf{d}^{\mathsf{T}}_k)-1}}{2}\log(1+\frac{\sigma_w^2}{\sigma_c^2}\lambda_{\mathrm{max}}(\mathbf{X}_k\mathbf{X}_k^{\mathsf{T}}(\mathbf{Y}_k\mathbf{Y}_k^{\mathsf{T}})^{-1})),
\end{align}
where $\lambda_{\mathrm{max}}$ is the maximum eigenvalue. Since $\mathbf{X}_k\mathbf{X}_k^{\mathsf{T}}$ and $\mathbf{Y}_k\mathbf{Y}_k^{\mathsf{T}}$ are positive semi-definite, we have
\begin{align}
    &I\!\left(\mathcal{W};
\mathcal{A}_k,\mathbf{f}_k \mid \langle \mathbf{f}_k,\mathbf{w}\rangle\right)\nonumber\\
&\leq \frac{{w_{\mathsf{H}}(\mathbf{d}^{\mathsf{T}}_k)-1}}{2}\log(1+\frac{\sigma_w^2}{\sigma_c^2}\lambda_{\mathrm{max}}(\mathbf{X}_k\mathbf{X}_k^{\mathsf{T}}(\mathbf{Y}_k\mathbf{Y}_k^{\mathsf{T}})^{-1}))\nonumber\\
&\leq \frac{{w_{\mathsf{H}}(\mathbf{d}^{\mathsf{T}}_k)-1}}{2}\log(1+\frac{\sigma_w^2}{\sigma_c^2}\lambda_{\mathrm{max}}(\mathbf{X}_k\mathbf{X}_k^{\mathsf{T}})\lambda_{\mathrm{max}}((\mathbf{Y}_k\mathbf{Y}_k^{\mathsf{T}})^{-1})))\nonumber\\
&= \frac{{w_{\mathsf{H}}(\mathbf{d}^{\mathsf{T}}_k)-1}}{2}\log(1+\frac{\sigma_w^2}{\sigma_c^2}\frac{ \lambda_{\mathrm{max}}(\mathbf{X}_k\mathbf{X}_k^{\mathsf{T}})}{\lambda_{\mathrm{min}}(\mathbf{Y}_k\mathbf{Y}_k^{\mathsf{T}})}),
\end{align}

Since $\mathbf{Y}_k\mathbf{Y}_k^{\mathsf{T}}$ is invertible, we have 
$\lambda_{\mathrm{min}}(\mathbf{Y}_k\mathbf{Y}_k^{\mathsf{T}}) \neq 0$. Therefore, for any $\varepsilon > 0$, one can choose $\sigma_c^2$ to be sufficiently large such that the information leakage to the user $k$ is bounded by $\varepsilon$. 

Moreover, the same argument applies to any index set $\mathcal{S}_k$ corresponding to a maximal linearly independent subset of rows of $\mathbf{C}(\mathrm{Sup}(\mathbf{d}_k^{\mathsf{T}}),:)$. 

This establishes the data secrecy condition~\eqref{Privacy_con} and completes the proof of the information leakage bound~\eqref{real_data_leakage} over $\mathbb{R}$ in Theorem~\ref{sufficient_condition}.

\end{appendices}
\bibliographystyle{IEEEtran}
\bibliography{references}
\end{document}